\newcommand{\size}[1]{\left| #1 \right|}
\newcommand{\E}{\mathbb{E}}
\newcommand{\V}{\mathbb{V}}
\newcommand{\remove}[1]{}
\newcommand{\R}{\mathbb{R}}
\newcommand{\N}{\mathbb{N}}
\newcommand{\cE}{\mathcal{E}}
\newcommand{\cP}{\mathcal{P}}
\newcommand{\cR}{\mathcal{R}}
\newcommand{\Oh}{\mathcal{O}}
\newcommand{\tOh}{\tilde{\mathcal{O}}}
\newcommand{\cF}{\mathcal{F}}
\newcommand{\cH}{\mathcal{H}}
\newcommand{\eps}{\varepsilon}
\newcommand{\complain}[1]{\textcolor{red}{#1}}
\newcommand{\dsubset}{A_1,\ldots, A_d}
\newcommand{\ssubset}{A_1^{[a_1]},\ldots, A_s^{[a_s]}}
\newcommand{\gpis}{{\sc CID}\xspace}
\newcommand{\gonepis}{$\mbox{\gpis}_1$ }
\newcommand{\gtwopis}{$\mbox{\gpis}_2$ }
\newcommand{\bis}{{\sc BIS}\xspace}
\newcommand{\tis}{{\sc TIS}\xspace}
\newcommand{\hest}{{\sc $d$-Hyperedge-Estimation}\xspace}
\newcommand{\hsample}{{\sc $d$-Hyperedge-Sample}\xspace}
\newcommand{\pr}{{\mathbb{P}}\xspace}
\newcommand{\verest}{{\sc Verify-Estimate}\xspace}
\newcommand{\lbeps}{\left({n^{-d}\log ^{5d+5} n} \right)^{1/4}}
\renewcommand{\hat}{\widehat}
\newcommand{\etal}{{\it{et al.}}}
\newcommand{\cAc}{\mbox{{\sc Rough Estimation}}\xspace}
\newcommand{\poly}{n^{\Omega(d)}}
\newcommand{\defproblem}[3]{
  \vspace{1mm}
\noindent\fbox{
  \begin{minipage}{0.96\textwidth}
  \begin{tabular*}{\textwidth}{@{\extracolsep{\fill}}lr} #1 \\ \end{tabular*}
  {\bf{Input:}} #2  \\
  {\bf{Output:}} #3
  \end{minipage}
  }
  \vspace{1mm}
}
\theoremstyle{plain}
\newtheorem{theo}{Theorem}[section]
\newtheorem{lem}[theo]{Lemma}
\newtheorem{pre}[theo]{Proposition}
\newtheorem{coro}[theo]{Corollary}
\newtheorem{cl}[theo]{Claim}
\theoremstyle{definition}
\newtheorem{defi}[theo]{Definition}
\newtheorem{rem}{Remark}
\newtheorem{obs}[theo]{Observation}
\begin{document}
\title{Faster Counting and Sampling Algorithms using  Colorful Decision Oracle} 

\author{
	Anup Bhattacharya~\footnote{National Institue of Science Education and Research, Bhubaneswar, Inida} 
\and Arijit Bishnu~\footnote{Indian Statistical Institute, Kolkata, India}
\and
Arijit Ghosh~\footnote{Indian Statistical Institute, Kolkata, India}
\and
Gopinath Mishra\footnote{University of Warwick,UK}}

\date{}
\maketitle

\maketitle
\begin{abstract}

\remove{
\noindent \complain{In this work, we estimate the number of hyperedges in a hypergraph $\cH(U(\cH),\cF(\cH))$, where $U(\cH)$ denotes the set of vertices and $\cF(\cH)$ denotes the set of hyperedges. We assume a query oracle access to the hypergraph $\cH$. Estimating the number of edges, triangles or small subgraphs in a graph is a well studied problem. Beame {\it{et al.}}~[ITCS 2018, TALG 2020] and  Bhattacharya {\it{et al.}}~[ISAAC 2019] gave algorithms to estimate the number of edges and triangles in a graph using queries to the {\sc Bipartite Independent Set} ({\sc BIS}) and the {\sc Tripartite Independent Set} ({\sc TIS}) oracles, respectively. 
Dell {\it{et al.}}~[SODA 2020] and Bhattacharya {\it{et al.}}~[arXiv 2019] independently generalized the earlier works by estimating the number of hyperedges using a query oracle that takes $d$ (non-empty) pairwise disjoint subsets of vertices $\dsubset \subseteq U(\cH)$ as input, and answers whether there exists a hyperedge in $\cH$ having (exactly) one vertex in each $A_i, i \in \{1,2,\ldots,d\}$; the query oracle is known as the 
{\bf Generalized $d$-partite independent set oracle ({\sf GPIS})}, or alternately as {\bf Colorful Independence Oracle (\gpis)}. The query complexity is polylogarithmic. We give a randomized algorithm for the hyperedge estimation problem using the \gpis query oracle to output $\widehat{m}$ for $m(\cH)$ satisfying $(1-\eps) \cdot m(\cH) \leq \widehat{m} \leq (1+\eps) \cdot m(\cH)$. The number of queries made by our algorithm, assuming $d$ to be a constant, is polylogarithmic in the number of vertices of the hypergraph. Our algorithm improves the query complexity of Dell {\it{et al.}}~[SODA 2020] by a log factor and is an improvement over all previous works.}
}

In this work, we consider $d$-{\sc Hyperedge Estimation} and $d$-{\sc Hyperedge Sample} problem in a hypergraph $\cH(U(\cH),\cF(\cH))$ in the query complexity framework, where $U(\cH)$ denotes the set of vertices and $\cF(\cH)$ denotes the set of hyperedges. The oracle access to the hypergraph is  called {\sc Colorful Independence Oracle} ({\sc CID}), which takes $d$ (non-empty) pairwise disjoint subsets of vertices $\dsubset \subseteq U(\cH)$ as input, and answers whether there exists a hyperedge in $\cH$ having (exactly) one vertex in each $A_i, i \in \{1,2,\ldots,d\}$. The problem of $d$-{\sc Hyperedge Estimation} and $d$-{\sc Hyperedge Sample} with {\sc CID} oracle access is important in its own right as a combinatorial problem. Also, Dell {\it{et al.}}~[SODA '20] established that  {\em decision} vs {\em counting} complexities of a number of combinatorial optimization problems can be abstracted out as $d$-{\sc Hyperedge Estimation} problems with a {\sc CID} oracle access.

The main technical contribution of the paper is an algorithm that estimates $m= \size{\cF(\cH)}$ with $\hat{m}$ such that 
{
$$
    \frac{1}{C_{d}\log^{d-1} n} \;\leq\; \frac{\hat{m}}{m} \;\leq\; C_{d} \log ^{d-1} n .
$$ 
by using at most $C_{d}\log ^{d+2} n$ many {\sc CID} queries, where $n$ denotes the number of vertices in the hypergraph $\cH$ and $C_{d}$ is a constant that depends only on $d$}. Our result coupled with the framework of Dell {\it{et al.}}~[SODA '21] implies improved bounds for the following fundamental problems:

\begin{description}
    \item[Edge Estimation] using the {\sc Bipartite Independent Set} ({\sc BIS}). We improve the bound obtained by Beame {\it{et al.}}~[ITCS '18, TALG '20].
    \vspace{2.5pt}
    \item[Triangle Estimation] using the {\sc Tripartite Independent Set} ({\sc TIS}). The previous best bound for the case of graphs with low {\em co-degree} (Co-degree for an edge in the graph is the number of triangles incident to that edge in the graph)  was due to Bhattacharya {\it{et al.}}~[ISAAC '19, TOCS '21], and Dell {\it{et al.}}'s result gives the best bound for the case of general graphs~[SODA '21]. We improve both of these bounds. 
        \vspace{2.5pt}
    \item[Hyperedge Estimation \& Sampling] using {\sc Colorful Independence Oracle} ({\sc CID}). We give an improvement over the bounds obtained by Dell {\it{et al.}}~[SODA '21].
\end{description}

\end{abstract}

\section{Introduction} \label{sec:intro}

Estimating different combinatorial structures like edges, triangles and cliques in an unknown graph that can be accessed only through {\em query oracles} is a fundamental area of research in {\em sublinear algorithms}~\cite{Feige06,GoldreichR08,EdenLRS15,EdenRS18}. Different query oracles provide unique ways of looking at the same graph. 
\remove{A \emph{subset query} with a subset $T \subseteq U$ asks whether $S \cap T$ is empty or not.
}
Beame \etal~\cite{BeameHRRS18} introduced an independent set based subset query oracle, named {\sc Bipartite Independent Set} (\bis) query, to estimate the number of edges in a graph using polylogarithmic queries. The \bis query answers a YES/NO question on the existence of an edge between two disjoint subsets of vertices of a graph $G$. The next natural questions in this line of research were problems of estimation and uniform sampling of hyperedges in hypergraphs~\cite{soda/DellLM20,BhattaISAAC,BhattaTOCS}. In this paper, we will be focusing on these two fundamental questions, and in doing so, we will improve all the previous results~\cite{talg/BeameHRRS20,soda/DellLM20,BhattaISAAC,BhattaTOCS}. 
\subsection{Our query oracle, results and the context}
A hypergraph $\cH$ is a \emph{set system} $(U(\cH),\cF(\cH))$, where $U(\cH)$ denotes a set of $n$ vertices and $\cF(\cH)$, a set of subsets of $U(\cH)$, denotes the set of hyperedges. A hypergraph $\cH$ is said to be {\em $d$-uniform} if every hyperedge in $\cH$ consists of exactly $d$ vertices. The cardinality of the hyperedge set is denoted as $m(\cH)=\size{\cF(\cH)}$. 
We will access the hypergraph using the following oracle\footnote{In~\cite{BishnuGKM018}, the oracle is named as {\sc Generalized Partite Independent Set} oracle. Here, we follow the same suit as Dell \etal~\cite{soda/DellLM20} with respect to the name of the oracle.}~\cite{BishnuGKM018}.  

\begin{defi}
[Colorful Independent Set (\gpis)] Given $d$ pairwise disjoint subsets of vertices $\dsubset \subseteq U(\cH)$ of a hypergraph $\cH$ as input, \gpis{} query answers {\sc Yes} if and only if $m(\dsubset) \neq 0$, where $m(\dsubset)$ denotes the number of hyperedges in $\cH$
having exactly one vertex in each $A_i$, where $i \in \{1,2,\ldots,d\}$. 
\label{def:gpis} 
\end{defi}

Note that the earlier mentioned \bis is a special case of \gpis when $d=2$. 
\remove{
Beame \etal~\cite{BeameHRRS18} showed that $(1 \pm \eps)$-approximation~\footnote{The statement ``$a$ is an $1 \pm \eps$ multiplicative approximation of $b$'' means $\size{b-a} \leq \eps \cdot b$.} to the number of edges in a unknown graph $G$ can be obtained by using $\Oh\left(\frac{1}{\eps^4} \log^{14} n\right)$ \bis queries with probability $1-1/\poly(n)$.
\remove{
The \gpis{} query has a \emph{transversal} nature to it. A \emph{transversal}~\cite{Matgeo} of a hypergraph $\cH=(U(\cH), \cF(\cH))$ is a subset $T \subseteq U(\cH)$ that intersects all sets of $\cF(\cH)$. One can see a \gpis{} query as a \emph{transversal query} as it answers if there exists a transversal for the disjoint subsets $\dsubset$ as in Definition~\ref{def:gpis}. (Gopi: may be we can drop this!)}
}
With this query oracle access, we solve the following two problems. 

\defproblem{\hest}{Vertex set $U(\cH)$ of a hypergraph $\cH$ with $n$ vertices, a \gpis{} oracle access to $\cH$, and $\eps \in (0,1)$.}{A $(1 \pm \eps)$-approximation $\widehat{m}$ to $m(\cH)$ with probability $1-1/\poly$.}

Note that {\sc Edge Estimation} problem is a special case of \hest when $d=2$.

\defproblem{\hsample}{Vertex set $U(\cH)$ of a hypergraph $\cH$ with $n$ vertices, a \gpis{} oracle access to $\cH$, and $\eps\in (0,1)$.}{With probability $1-1/\poly$, report a sample from a distribution of hyperedges in $\cH$ such that the probability that any particular hyperedge is sampled lies in the interval $\left[(1-\eps)\frac{1}{m},(1+\eps)\frac{1}{m}\right]$.}

This area started with the investigation of {\sc Edge Estimation} problem by Dell and Lapinskas~\cite{DellL18,toct/DellL21} and Beame \etal~\cite{BeameHRRS18}, then Bhattacharya \etal~\cite{BhattaISAAC,BhattaTOCS} studied \hest for $d=3$, and more recently Dell \etal~\cite{soda/DellLM20} gave algorithms for \hest and \hsample for general $d$. Beame \etal~\cite{BeameHRRS18} showed that {\sc Edge Estimation} problem can be solved using $\Oh\left(\frac{\log^{14} n}{\eps^4} \right)$ \bis queries. Having estimated the number of edges in a graph using \bis{} queries, a very natural question was to estimate the number of hyperedges in a hypergraph using an appropriate query oracle. This extension is nontrivial as two edges in a graph can intersect in at most one vertex but the intersection pattern between two hyperedges in a hypergraph is more complicated. As a first step towards resolving this question, Bhattacharya \etal~\cite{BhattaISAAC,BhattaTOCS} considered \hest in $3$-uniform hypergraphs using {\sc CID} queries.
They showed that when {\em co-degree} of any pair of vertices in a $3$-uniform hypergraph is bounded above by $\Delta$, then one can solve \hest using $\Oh\left(\frac{\Delta^2 \log^{18}n}{\eps^4}\right)$ {\sc CID} queries. Recall that co-degree of two vertices in a hypergraph is the number of hyperedges that contain both vertices. 
Dell \etal~\cite{soda/DellLM20} generalized the results of Beame \etal~\cite{BeameHRRS18} and  Bhattacharya \etal~\cite{BhattaISAAC,BhattaTOCS}, and obtained a  similar (with an improved dependency in terms of $\eps$) result for the \hest problem for general $d$. Apart from \hest problem, they also considered the problem of \hsample.
The results of Dell \etal~\cite{soda/DellLM20} are formally stated in the following proposition:
 \begin{pre}[{\bf Dell et al.~\cite{soda/DellLM20}}]\label{pro:dell}
 \hest and \hsample can be solved by using $\Oh_d\left(\frac{\log ^{4d+8} n}{\eps^2} \right)$ and $\Oh_d\left(\frac{\log ^{4d+12} n}{\eps^2} \right)$ \gpis queries, respectively.~\footnote{Dell \etal~\cite{soda/DellLM20} studied \hest and \hsample where the probability of success is $1-\delta$ for some given $\delta \in (0,1)$, and have showed that \hest and \hsample can be solved by using $\Oh_d\left(\frac{\log ^{4d+7} n}{\eps^2} \log \frac{1}{\delta} \right)$ and $\Oh_d\left(\frac{\log ^{4d+11} n}{\eps^2}\log \frac{1}{\delta} \right)$ \gpis queries, respectively. In Proposition~\ref{pro:dell}, we have taken $\delta=n^{\Oh(d)}.$  But both the results of  Beame \etal~\cite{BeameHRRS18,talg/BeameHRRS20} and Bhattacharya \etal~\cite{BhattaISAAC, BhattaTOCS} are in the high probability regime.
    
In this paper, we work with success probability to be $1-1/\poly$ for simplicity of presentation and compare our results with all previous results in a high probability regime.}
 \end{pre}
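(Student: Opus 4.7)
The plan is to establish the \hest bound first and then derive the \hsample bound from it by a hierarchical random walk on a product partition tree; this is why the exponents of $\log n$ in the two bounds differ only by a constant.

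\emph{Stage 1: coarse estimation to within a $\log^{O(1)} n$ factor.} On each color class $A_i$ I would build a binary partition tree of depth $O(\log n)$; the Cartesian product of the $d$ trees induces a partition of $A_1 \times \cdots \times A_d$ whose individual leaves can be tested for containing a hyperedge by a single \gpis{} query. Because the number of leaves is prohibitively large, I would adopt a geometric-guessing approach: for each candidate magnitude $2^j$ with $2^j \in [1, \binom{n}{d}]$, sample $O_d(\log^{O(d)} n)$ random $d$-tuples of subtrees at level $j$ and count how many evaluate to \textsc{Yes}. The depth is chosen so that if $m \approx 2^j$ then a random tuple is non-empty with non-trivial probability while most non-empty tuples contain $O(1)$ hyperedges. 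The correct $j$ is identified because shallower depths return almost always \textsc{Yes} and deeper ones almost always \textsc{No}, so a threshold transition locates the scale of $m$.

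\emph{Stage 2: refinement to a $(1\pm\eps)$-estimator, and sampling.} Given the coarse estimate $\tilde m$, I would stratify each $A_i$ into $O(\log n)$ buckets by a per-vertex ``degree'' proxy, which itself is obtained by running the Stage-1 estimator on singletons in one coordinate. Within each combination of $d$ strata the hyperedges are approximately uniform, so random subset sampling combined with \gpis{} queries gives an unbiased estimator with controlled variance. A Chernoff bound over $O_d(\log^{O(d)} n / \eps^2)$ independent samples yields the claimed $\log^{4d+8} n / \eps^2$ complexity: roughly $\log^{2d} n$ from the product of per-color stratifications, a further $\log^{2d+O(1)} n$ from the nested coarse-estimator invocations, and $\log^{O(1)} n$ from amplification of the failure probability to $n^{-\Omega(d)}$. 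For \hsample\ I would then simulate a random walk down the product partition tree in which, at each internal node, the children are weighted by estimates of the hyperedge counts in their subtrees (obtained by repeated calls to the Stage-2 estimator); after $O(d \log n)$ steps the walk reaches a leaf containing at most one hyperedge, which is returned. The accumulated multiplicative errors along this walk account for the extra $\log^4 n$ factor separating the \hsample\ bound from the \hest\ bound.

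\emph{Main obstacle.} The principal technical difficulty is controlling variance in Stage 2 when hyperedges are distributed very unevenly among cells: a small number of ``heavy'' cells can dominate $m$ and ruin concentration. Making the degree-based stratification actually balance cells within each stratum --- and doing so when the only oracle access is \gpis{}, so the degree proxy itself has to be computed by recursive invocations of the estimator --- is the delicate step. It is exactly what forces $O(\log n)$ levels of stratification per coordinate, propagating to the exponent $4d + O(1)$ of $\log n$ in the final bound.
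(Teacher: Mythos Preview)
This proposition is a cited result from Dell \etal, not something the present paper proves from scratch. The only ``derivation'' the paper offers appears in Section~\ref{sec:whyimp}: one simply plugs the parameters of Dell \etal's own {\sc Coarse} algorithm (Lemma~\ref{pro:coarse}: approximation factor $b=\Oh_d(\log^{d} n)$, query cost $q=\Oh_d(\log^{2d+3} n)$) into the black-box composition lemma (Lemma~\ref{pro:dell-est}), obtaining
\[
\Oh_d\Big(\log^{2} n\cdot \tfrac{b^{2}\log^{2} n}{\eps^{2}}\cdot q\Big)
=\Oh_d\Big(\tfrac{\log^{4d+7} n}{\eps^{2}}\Big)
\]
for \hest (with the extra $\log n$ coming from $\delta=n^{-\Oh(d)}$), and four more $\log n$ factors for \hsample. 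No further argument is given or needed.

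Your sketch instead attempts to reconstruct the entire Dell \etal\ algorithm. The high-level architecture---coarse estimation, then refinement, then sampling by a walk down a partition tree---is in the right spirit, but the refinement mechanism you describe is not the one the paper attributes to Dell \etal. You propose a per-coordinate degree-bucket stratification; the paper (Section~\ref{ssec:coarse} and Section~\ref{sec:whyimp}) describes Dell \etal's refinement as recursive {\sc Halving} of the vertex set combined with \emph{importance sampling} to keep the number of live subhypergraphs bounded. The $b^{2}$ term in Lemma~\ref{pro:dell-est} is exactly the variance factor from importance sampling, not a $d$-fold product of per-color stratifications. Consequently your bookkeeping of exponents (``roughly $\log^{2d} n$ here, another $\log^{2d+\Oh(1)} n$ there'') is not tied to the actual algorithm and does not establish that the exponent is $4d+8$; it is a plausibility argument, not a proof. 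The concrete gap is that you never show your degree-proxy stratification can be computed with only \gpis\ access at the stated cost, nor that it yields the required variance bound---this is precisely the ``main obstacle'' you flag but do not resolve.
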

 Currently, the best known bound (prior to this work) for solving \hest problem, for general $d$, is due to Dell \etal~\cite{soda/DellLM20}, but note that for constant $\eps \in (0,1)$, Beame \etal~\cite{BeameHRRS18,talg/BeameHRRS20} still have the best bound for the {\sc Edge Estimation} problem.


Our main result is an improved \emph{coarse estimation} technique, named $\cAc$, and is stated in the following theorem. The significance of the coarse estimation technique will be discussed in Section~\ref{ssec:coarse}.  

\begin{theo}[{\bf Main result}] \label{theo:main} 
    There exists an algorithm $\cAc$ that has \gpis query access to a $d$-uniform hypergraph $\cH(U,\cF)$ and returns $\widehat{m}$ as an estimate for $m=\size{\cF(\cH)}$ such that 
    $$
        \frac{1}{C_{d} \log ^{d-1} n} \leq \frac{\widehat{m}}{m}
        \leq C_{d}\log ^{d-1} n
    $$
    with probability at least $1-1/\poly$ using at most $C_{d}\log^{d+2} n$ \gpis queries, where $C_{d}$ is a constant that depends only on $d$ and $n$ denotes the number of vertices in $\cH$.
\end{theo}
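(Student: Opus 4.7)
The plan is to design $\cAc$ as a grid search over subset-size vectors, amplify each \gpis{} query by majority vote, and read off $\widehat{m}$ from a phase transition in the answer pattern. For each tuple $\mathbf{j} = (j_1, \dots, j_d) \in \{0, \dots, \lceil \log n \rceil\}^d$, partition $U$ uniformly into $d$ equal parts $V_1, \dots, V_d$, subsample each $V_i$ at rate $p_i = 2^{-j_i}$ to form $A_i \subseteq V_i$, and issue a \gpis{} query on $(A_1, \dots, A_d)$. Repeat this random experiment independently $T = \Theta(\log^2 n)$ times and let $Y_{\mathbf{j}}$ be the fraction of Yes answers. Summed over all $(\log n + 1)^d$ tuples, the total query count is $O(\log^{d+2} n)$, matching the budget in the theorem.

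For the analysis, let $X_{\mathbf{j}}$ denote the number of hyperedges with exactly one vertex in each $A_i$. A short computation gives $\E[X_{\mathbf{j}}] = (d!/d^d) \cdot m \cdot 2^{-J}$ with $J = \sum_i j_i$. Two complementary bounds control $\Pr[X_{\mathbf{j}} \geq 1]$: when $\E[X_{\mathbf{j}}]$ is small, Markov's inequality forces $Y_{\mathbf{j}}$ below threshold with high probability; when $\E[X_{\mathbf{j}}]$ is sufficiently large, a Paley--Zygmund / second-moment argument yields a lower bound on $\Pr[X_{\mathbf{j}} \geq 1]$, and so $Y_{\mathbf{j}}$ stays above threshold with high probability. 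A Chernoff bound over $T = \Theta(\log^2 n)$ trials plus a union bound over the $(\log n)^d$ tuples turn these into simultaneous high-probability statements. The algorithm then outputs $\widehat{m} = c_d \cdot 2^{\widehat{J}}$, where $\widehat{J}$ is the largest $J$ for which some $\mathbf{j}$ with $\sum_i j_i = J$ has $Y_{\mathbf{j}}$ above threshold.

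The main obstacle I expect is the second-moment lower bound on $\Pr[X_{\mathbf{j}} \geq 1]$. One must control the covariance sum $\sum_{e, e'} \mathrm{Cov}(\mathbf{1}_{e \in X_{\mathbf{j}}}, \mathbf{1}_{e' \in X_{\mathbf{j}}})$ across hyperedge pairs with all possible intersection sizes $|e \cap e'| = k$ for $k = 1, \dots, d-1$; this is where the hypergraph's structural complexity (large co-degrees, clusters of overlapping hyperedges) enters. A careful per-$k$ accounting should yield $\V[X_{\mathbf{j}}] \leq \mathrm{polylog}(n) \cdot \E[X_{\mathbf{j}}]^2$ under mild regularity, and the multiplicative slack in this variance bound is exactly what propagates into the $\log^{d-1} n$ approximation quality of the output: the ``Yes'' threshold occurs at $\E[X_{\mathbf{j}}] = \Omega(1)$ while the ``No'' threshold requires $\E[X_{\mathbf{j}}] = O(1/\log^{d-1} n)$, so $\widehat{J}$ pins down $\log_2 m$ only up to an additive slack of $(d-1)\log\log n$, i.e.\ a factor of $\log^{d-1} n$ in $\widehat{m}$.
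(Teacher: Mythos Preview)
Your upper-bound direction (Markov plus a union bound over tuples) is fine and mirrors the paper's Lemma~4.1. The genuine gap is the lower bound: the second-moment / Paley--Zygmund argument you sketch does not hold for arbitrary $d$-uniform hypergraphs, and the theorem carries no ``mild regularity'' assumption. Take $d=3$ and the sunflower $\cF=\{\{u,v,w_i\}:i\in[m]\}$ on $n=m+2$ vertices. For the symmetric choice $p_1=p_2=p_3\approx m^{-1/3}$ you have $\E[X_{\mathbf j}]=\Theta(1)$, yet $\Pr[X_{\mathbf j}\ge 1]\le \Pr[u\text{ and }v\text{ both survive}]=O(m^{-2/3})$. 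Equivalently, the $\binom{m}{2}$ pairs with $|e\cap e'|=2$ each contribute covariance $\Theta(p_1p_2p_3^2)$, so the covariance sum is $\Theta(m^{2/3})\gg \E[X_{\mathbf j}]^2$; the per-$k$ accounting therefore blows up polynomially, not polylogarithmically. You might respond that for this example the asymmetric tuple $(0,0,\log m)$ works --- and it does --- but proving that \emph{some} tuple on the right level set always has controlled variance is precisely the nontrivial content of the theorem, and a covariance bound for a generic $\mathbf j$ does not establish it.

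The paper sidesteps variance entirely. Working with ordered hyperedges, it proves a structural bucketing claim (Claim~3.2): there exist levels $q_1,\dots,q_{d-1}$ such that, once vertices $a_1,\dots,a_{i-1}$ in the first $i-1$ buckets are fixed, the $i$-th bucket of extensions is large relative to $2^{q_{i-1}}/2^{q_i}$. The sampling rates in {\sc Verify-Estimate} are then tuned ($p_1=2^{j_1}/\hat R$, $p_i=2^{j_i-j_{i-1}}\cdot d\log n$ for $2\le i\le d-1$, $p_d=2^{-j_{d-1}}$) so that, conditionally on hitting bucket $i$, the probability of hitting bucket $i+1$ is at least $1-1/e$; chaining gives $\Pr[X_{\mathbf q}\ge 1]\ge(1-1/e)^d>2^{-d}$ directly (Lemma~4.2). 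This is why the search is over $(d-1)$-tuples and why the rates telescope rather than being a product $2^{-j_i}$. The $\log^{d-1}n$ approximation factor in the statement then arises from the union bound over $(d\log n+1)^{d-1}$ tuples in the Markov direction, not from slack in a variance bound.
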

Coarse estimation gives a crude polylogarithmic approximation for $m$, the number of hyperedges in $\cH$. This improvement in the coarse estimation algorithm coupled with \emph{importance sampling} and the algorithmic framework of Dell \etal~\cite{soda/DellLM20} gives an improved algorithm for both \hest and \hsample problems. 
   \begin{theo}[{\bf Improved bounds for estimating and sampling}]
   \label{coro:imp}
\hest and \hsample problems can be solved by using $\Oh_d\left(\frac{\log ^{3d+5} n}{\eps^2} \right)$ and $\Oh_d\left(\frac{\log ^{3d+9} n}{\eps^2} \right)$ \gpis queries, respectively.
 \end{theo}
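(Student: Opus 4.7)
The plan is to derive Theorem~\ref{coro:imp} as a near-immediate corollary of Theorem~\ref{theo:main} by substituting our improved coarse estimator $\cAc$ into the two-phase framework of Dell \etal~\cite{soda/DellLM20}. Their framework decomposes both \hest and \hsample into (i) a \emph{coarse estimation} phase that produces some $\hat{m}_0$ with $\hat{m}_0/\kappa \le m \le \kappa\hat{m}_0$, and (ii) a refinement phase based on importance sampling that uses $\hat{m}_0$ to set a subsampling rate and draws enough samples to concentrate to a $(1\pm\eps)$-estimate. The central observation is that the refinement phase is agnostic to how the rough estimate was obtained: its query cost depends only on the approximation ratio $\kappa$ of the coarse phase, not on the number of queries used to produce $\hat{m}_0$. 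The bound in Proposition~\ref{pro:dell} arises from pairing their refinement phase with their own internal coarse estimator, which achieves a larger $\kappa$ at a substantially higher query cost than $\cAc$.

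First, I would invoke $\cAc$ from Theorem~\ref{theo:main} once, obtaining $\hat{m}_0$ with $\kappa = C_d \log^{d-1} n$ at a cost of $C_d\log^{d+2} n$ \gpis queries. Next, I would feed $\hat{m}_0$ into the Dell \etal\ importance-sampling subroutine and re-trace their variance/concentration analysis with the new $\kappa$. In their analysis the number of importance-sampling rounds needed to obtain a $(1\pm\eps)$-approximation scales as $\Oh_d(\kappa \cdot \mathrm{polylog}(n)/\eps^2)$, while each round costs an additional $\Oh_d(\log^{c(d)} n)$ \gpis queries coming from a \gpis-based bisection that locates the contributing hyperedge inside a bucket. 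Substituting $\kappa = \Oh_d(\log^{d-1} n)$ and combining with the per-round exponent yields the claimed $\Oh_d(\log^{3d+5} n/\eps^2)$ query upper bound for \hest; the improvement by a $\log^{d+3} n$ factor over Proposition~\ref{pro:dell} is exactly the arithmetic savings from using our sharper $\kappa$ and coarse-query budget instead of theirs.

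For \hsample, the pipeline is identical up to an additional accept/reject step that corrects the non-uniformity induced by importance sampling into the target near-uniform distribution over $\cF(\cH)$, and this extra step multiplies the per-sample cost by a further $\Oh(\log^4 n)$ factor, producing the stated $\Oh_d(\log^{3d+9} n/\eps^2)$ bound. The main obstacle, and essentially the only real work, is the careful bookkeeping of exponents when recomposing Dell \etal's framework with our tighter parameters: one must verify that their bucketing and union-bound arguments remain valid when $\kappa$ shrinks by a logarithmic factor (in particular, that no hidden $\log^d n$ dependence on the internal coarse estimator survives and cancels our savings), and that the per-round query cost $c(d)$ coming from their bisection is unchanged. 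Once these checks are made, Theorem~\ref{coro:imp} follows by direct substitution of Theorem~\ref{theo:main} into their analysis, with no new randomized primitives required.
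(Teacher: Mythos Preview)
Your high-level strategy---plug the improved coarse estimator of Theorem~\ref{theo:main} into the Dell \etal~framework---is exactly what the paper does. However, your description of \emph{how} the framework uses the coarse estimator is wrong in two ways, and if you actually carried out the bookkeeping as you describe it you would not arrive at the stated exponents for the right reasons.

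First, the coarse estimator is not invoked once on the whole hypergraph. The Dell \etal~recursion repeatedly halves the vertex set and maintains a list of induced subhypergraphs; importance sampling is applied to that list at every level, and it needs a separate rough estimate for the number of hyperedges in \emph{each} subhypergraph currently in the list. Thus the query cost $q$ of the coarse routine is a \emph{multiplicative} factor in the final bound, not an additive one-time cost. Second, the dependence on the approximation ratio of the coarse routine is quadratic, not linear: the size of the trimmed list after importance sampling scales like $b^2/\eps^2$ (this is where the Bernstein-type concentration enters), not $\kappa/\eps^2$ as you wrote. The paper packages both facts into a black-box statement (Lemma~\ref{pro:dell-est}): with a coarse routine of approximation ratio $b$ and query cost $q$, \hest costs $\Oh_d\!\left(\log^2 n\left(\log(nb)+\tfrac{b^2\log^2 n}{\eps^2}\right)q\right)$ and \hsample the same with $\log^6 n$ in place of $\log^2 n$. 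Substituting $b=\Oh_d(\log^{d-1}n)$ and $q=\Oh_d(\log^{d+2}n)$ from Theorem~\ref{theo:main} (and a further $\log n$ for the high-probability regime) gives the exponents $3d+5$ and $3d+9$ directly---no re-tracing of the variance analysis is needed.
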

 The details regarding how Theorem~\ref{theo:main} can be used together with the framework of Dell \etal~\cite{soda/DellLM20}
 to prove Theorem~\ref{coro:imp} will be discussed in Section~\ref{sec:whyimp}.
 
 Using Theorem~\ref{coro:imp}, we directly get the following improved bounds for {\sc Edge Estimation} and \hest {in $3$-uniform hypergraph} by substituting $d=2$ and $d=3$, respectively.
 \begin{coro}
    \label{cor-new-final}
    \begin{description}
     \item[(a)]{\sc Edge Estimation} can be solved using $\Oh\left(\frac{\log ^{11} n}{\eps^{2}} \right)$ queries to {\sc Bipartite Independent Set} ({\sc BIS}) oracle.

     \item[(b)] \hest in a $3$-uniform hypergraph can be solved using $\Oh\left(\frac{\log ^{14} n}{\eps^{2}} \right)$ {\sc CID} queries.
    \end{description}
 \end{coro}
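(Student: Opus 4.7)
The corollary is essentially a direct substitution of specific values of $d$ into the bound provided by Theorem~\ref{coro:imp}, so the plan is to invoke that theorem as a black box after clarifying the correspondence between the oracles for the relevant values of $d$.

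For part (a), I would first note that {\sc Edge Estimation} in a graph is precisely \hest specialized to $d=2$, and that the {\sc BIS} oracle coincides with the \gpis{} oracle in this case (as remarked immediately after Definition~\ref{def:gpis}, where \bis{} is identified as the $d=2$ special case of \gpis). Substituting $d=2$ into the upper bound $\Oh_d\!\left(\frac{\log^{3d+5} n}{\eps^{2}}\right)$ of Theorem~\ref{coro:imp} yields an exponent of $3 \cdot 2 + 5 = 11$ on $\log n$, producing the claimed $\Oh\!\left(\frac{\log^{11} n}{\eps^{2}}\right)$ bound on the number of {\sc BIS} queries.

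For part (b), the argument is analogous. \hest in a $3$-uniform hypergraph is exactly the $d=3$ case of Theorem~\ref{coro:imp}, so the exponent becomes $3 \cdot 3 + 5 = 14$, giving $\Oh\!\left(\frac{\log^{14} n}{\eps^{2}}\right)$ \gpis{} queries, which are precisely {\sc CID} queries in the $3$-uniform setting.

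There is no genuine technical obstacle here, since all the work is already packaged inside Theorem~\ref{coro:imp}. The only point worth flagging is that the constant $C_d$ hidden in the $\Oh_d(\cdot)$ notation is absorbed into a plain $\Oh(\cdot)$ once $d$ is fixed to $2$ or $3$, so no further argument about the $d$-dependence is required; the corollary is a clean plug-in of the two smallest relevant cases.
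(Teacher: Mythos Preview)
Your proposal is correct and matches the paper's approach exactly: the paper states that the corollary follows directly from Theorem~\ref{coro:imp} by substituting $d=2$ and $d=3$, which is precisely what you do.
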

  The above corollary gives the best bound (till now) for the {\sc Edge Estimation}. Recall that Bhattacharya \etal~\cite{BhattaISAAC, BhattaTOCS} proved that when the co-degree of a $3$-uniform graph is bounded by $\Delta$ then 
 \hest in that hypergraph can be solved using 
 $\Oh\left(\frac{\Delta^2 \log ^{18} n}{\eps ^4}\right)$ {\sc CID} queries. For fixed $\eps \in (0,1)$ and $\Delta = o(\log n)$ the bound obtained by 
Bhattacharya \etal~\cite{BhattaISAAC, BhattaTOCS} is asymptotically better than the bound we get from Dell et al.~\cite{soda/DellLM20}, see Proposition~\ref{pro:dell}. Note that Corollary~\ref{cor-new-final}~(b) improves the bounds obtained by Bhattacharya \etal~\cite{BhattaISAAC, BhattaTOCS} and Dell et al.~\cite{soda/DellLM20} for all values of $\Delta$ and $\eps \in (0,1)$.

\remove{ 
 \complain{To Sir and Arijit Da: Please have a look into this  blue marked part.}
 
 \color{blue}
The above corollary implies improved bounds over a number of recent results- 
 \begin{itemize}
 \item[(i)] Putting $d=2$ (in the above corollary), we improve the bound for {\sc Edge Estimation} using the {\sc Bipartite Independent Set} ({\sc BIS}) oracle, by Beame \etal~\cite{BeameHRRS18, talg/BeameHRRS20}, from $\Oh\left(\frac{1}{\eps^4} \log ^{14} n\right)$ to $\Oh\left(\frac{1}{\eps^2} \log ^{11} n\right)$;
 \item[(ii)] Putting $d=3$, we improve the bound for {\sc Triangle Estimation} using the {\sc Tripartite Independent Set} ({\sc TIS}) oracle, by Bhattacharya \etal~\cite{BhattaISAAC, BhattaTOCS}, from $\Oh\left(\frac{\Delta^2 \log ^{18} n}{\eps ^4}\right)$ to $\Oh\left(\frac{\log ^{14} n}{\eps ^2}\right)$~\footnote{As already noted, $\Delta$ denotes the upper bound on the maximum number of triangles that can have an edge in common. Out result does not require such a dependency.};  
 \item[(iii)] The corollary itself is our improvement for {\sc Hyperedge Estimation} and {\sc Hyperedge Sampling} using {\sc Colorful Independence Oracle} ({\sc CID}) by Dell \etal~\cite{soda/DellLM20}. 
 \end{itemize}
 \color{black}
 }



\remove{
The main contribution of the paper is to find an $\Oh_d\left(\log ^{d-1} n\right)$~\footnote{$\Oh_d(\cdot)$ and $\Omega_d(\cdot)$ hide a constants depending on $d$.} factor approximation of $m(\cH)$ when we have a query access of a particular kind (introduced by Bishnu et al.~\cite{BishnuGKM018}). As an implication, Our result implies improved bounds of a number of recent results- (i) {\sc Edge Estimation} using the {\sc Bipartite Independent Set} ({\sc BIS}) oracle by Beame \etal~\cite{BeameHRRS18, talg/BeameHRRS20}, (ii) {\sc Triangle Estimation} using the {\sc Tripartite Independent Set} ({\sc TIS}) oracle by Bhattacharya \etal~\cite{BhattaISAAC, BhattaTOCS}, and (iii) {\sc Hyperedge Estimation} and {\sc Hyperedge Sampling} using {\sc Colorful Independence Oracle} ({\sc CID}) by Dell \etal~\cite{soda/DellLM20}. 

We will discuss it later in detail. Let us first define the query oracle and state the result formally.

Our main result is stated in the following theorem. 

\begin{theo}[{\bf Main result}] \label{theo:main} There exists an algorithm $\cAc$ that has \gpis query access to a $d$-uniform hypergraph $\cH(U,\cF)$ and returns $\widehat{m}$ as an estimate for $m=\size{\cF(\cH)}$ such that $\Omega_d \left(\frac{1}{\log ^{d-1} n}\right)\leq \frac{\widehat{m}}{m}\leq \Oh_d\left(\log ^{d-1} n\right)$
with probability at least $1-1/\poly(n)$~\footnote{In this paper, $\poly(n)$ denotes $n^{\Oh (d)}$}. Moreover, the number of \gpis queries made by the algorithm is $\Oh_d\left(\log ^{d+2} n \right)$. \end{theo}


\subsection{Putting the problem and the query oracle in context}
In the \emph{subset size estimation} problem using the \emph{query model} of computation, the \emph{subset query} oracle is used to estimate the size of an unknown subset $S \subseteq U$, where $U$ is a known universe of elements. A \emph{subset query} with a subset $T \subseteq U$ asks whether $S \cap T$ is empty or not. \remove{The estimation technique involves querying the subset query oracle with different subsets $T$ and use some concentration bounds to estimate the size of $S$.} Viewed differently, this is about estimating an unknown set $S$ by looking at its intersection pattern with a known set $T$. At its core, a subset query essentially enquires about the existence of an intersection between two sets -- a set chosen by the algorithm designer and an unknown set whose size we want to estimate. \remove{A \emph{set membership query} is a special case of a subset query, where one asks a Yes/No question about the existence of an element in a set.} The study of subset queries was initiated in a breakthrough work by Stockmeyer~\cite{Stockmeyer83, Stockmeyer85} and later formalized by Ron and Tsur~\cite{RonT16}. Note that 
a subset query is a \gpis query when $d=1$,

Beame \etal~\cite{BeameHRRS18} considered a subset query oracle, named {\sc Bipartite Independent Set} (\bis) query oracle, to estimate the number of edges in a graph. The \bis query answers a YES/NO question on the existence of an edge between two disjoint subsets of vertices of a graph $G$. Note that \bis is a special case of \gpis when $d=2$. Beame \etal~\cite{BeameHRRS18} showed that $(1 \pm \eps)$-approximation~\footnote{The statement ``$a$ is an $1 \pm \eps$ multiplicative approximation of $b$'' means $\size{b-a} \leq \eps \cdot b$.} to the number of edges in a unknown graph $G$ can be obtained by using $\Oh\left(\frac{1}{\eps^4} \log^{14} n\right)$  \bis queries with probability $1-1/\poly(n)$. Having estimated the number of edges in a graph using \bis{} queries, a very natural question was to estimate the number of hyperedges in a hypergraph using an appropriate query oracle. The answer to the above question is not obvious as two edges in a graph can intersect in at most one vertex but the intersection between two hyperedges in a hypergraph can be an arbitrary set. As a first step towards resolving this generalized question, Bhattacharya et al.~\cite{BhattaISAAC,BhattaTOCS} considered the hyperedge estimation problem using a {\sc Tripartite Independent Set} ({\sc TIS}) oracle in $3$-uniform hypergraphs. A {\sc TIS} query takes three disjoint subsets of vertices as input and reports whether there exists a hyperedge having a vertex in each of the three sets. They showed that when the number of hyperedges having two vertices in common is bounded above by $\Delta$, then one can report $(1 \pm \eps)$-approximation to the number of hyperedges in a $3$-uniform hypergraph by using $\Oh\left(\frac{\Delta^2 \log^{18}n}{\eps^4}\right)$ \tis queries probability at least $1-1/\poly(n)$. Note that \tis is a special case of \gpis when $d=3$. 
Then Dell et al.~\cite{soda/DellLM20} generalized the results of Beame \etal~\cite{BeameHRRS18} and  Bhattacharya et al.~\cite{BhattaISAAC,BhattaTOCS}, and obtained a similar (but some what improved in terms of $\eps$) results for the \hest problem using \emph{colorful independent set}
(\gpis). Apart from \hest problem, they also considered the problem of sampling hyperedges from a hypergraph $\cH$ \emph{almost uniformly} (\hsample). \hest and \hsample are formally defined as follows:

\defproblem{\hest}{A set of $n$ vertices $U(\cH)$ of a hypergraph $\cH$, a \gpis{} oracle access to $\cH$, and $\eps \in (0,1)$.}{An $(1 \pm \eps)$-apprximation $\widehat{m}$ to $m(\cH)$ with probability $1-1/\poly(n)$.}

\defproblem{\hsample}{A set of $n$ vertices $U(\cH)$ of a hypergraph $\cH$, a \gpis{} oracle access to $\cH$, and $\eps\in (0,1)$.}{With probability $1-1/\poly(n)$, report a sample from a distribution of hyperedges in $\cH$ such that the probability that any particular hyperedge is sampled lies in the interval $\left[(1-\eps)\frac{1}{m},(1+\eps)\frac{1}{m}\right]$.}

 The results of Dell \etal~\cite{soda/DellLM20} are formally stated in the following proposition:
 \begin{pro}[Dell \etal~\cite{soda/DellLM20}]\label{pro:dell}
 \hest and \hsample can be solved by using $\Oh_d\left(\frac{\log ^{4d+8} n}{\eps^2} \right)$ and $\Oh_d\left(\frac{\log ^{4d+12} n}{\eps^2} \right)$ \gpis queries, respectively.
 \end{pro}
\complain{Gopi: How can we cite our arxiv result properly in this context.}
The main result in this paper (Theorem~\ref{theo:main}) implies an improvement in the query complexities for \hest and \hsample mentioned in the above proposition. The improved result is summerized as the following corollary.

   \begin{coro}[{\bf Improved bounds for \hest and \hsample}]
   \label{coro:imp}
{ \hest and \hsample can be solved by using $\Oh_d\left(\frac{\log ^{3d+5} n}{\eps^2} \right)$ and $\Oh_d\left(\frac{\log ^{3d+9} n}{\eps^2} \right)$ \gpis queries, respectively.}
 \end{coro}

 Putting $d=2$ and $d=3$ in the above corollary, we get improvements the {\sc Edge Estimation}  result by Beame \etal~\cite{BeameHRRS18,talg/BeameHRRS20}, as well as the {\sc Triangle Estimation} result by Bhattacharya \etal~\cite{BhattaISAAC, BhattaTOCS}, respectively.
 
 
 The details regarding how Theorem~\ref{theo:main} implies the above corollary will be discussed in Section~\ref{sec:whyimp}.
 
 \begin{rem}
 Dell \etal~\cite{soda/DellLM20} studied \hest and \hsample where the probability of success is $1-\delta$ for some given $\delta \in (0,1)$, and have showed that \hest and \hsample can be solved by using $\Oh_d\left(\frac{\log ^{4d+7} n}{\eps^2} \log \frac{1}{\delta} \right)$ and $\Oh_d\left(\frac{\log ^{4d+11} n}{\eps^2}\log \frac{1}{\delta} \right)$ \gpis queries, respectively. In Proposition~\ref{pro:dell}, we have taken $\delta=n^{\Oh(d)}.$  But both the results of  Beame \etal~\cite{BeameHRRS18,talg/BeameHRRS20} and Bhattacharya \etal~\cite{BhattaISAAC, BhattaTOCS} are in the high probability regime.
 
In this paper, we work with success probability to be $1-1/\poly(n)$ for simplicity of presentation and compare our results with all previous results in a high probability regime.
 \end{rem}

\subsection{Other related works}
\noindent Graph parameter estimation, where one wants to estimate the number of edges, triangles, or small subgraphs in a graph, is a well-studied area of research in sub-linear algorithms. Feige \cite{Feige06}, and Goldreich and Ron \cite{GoldreichR08} gave algorithms to estimate the number of edges in a graph using degree, and degree and neighbor queries, respectively. Eden \etal~\cite{EdenLRS15} estimated the number of triangles in a graph using degree, neighbor and edge existence queries and gave almost matching lower bound on the query complexity. This result was generalized for estimating the number of cliques of size $k$ in~\cite{EdenRS18}. Since the information revealed by degree, neighbor and edge existence queries is limited to the locality of the queried vertices, these queries are known as local queries \cite{G2017}. The subset queries, used in \cite{BeameHRRS18,Bhatta-abs-1808-00691, BhattaISAAC,soda/DellLM20}, are examples of global queries, where a global query can reveal information of the graph at a much broader level.

Goldreich and Ron \cite{GoldreichR08} solved the edge estimation problem in undirected graphs using $\tilde{O}(n/\sqrt{m})$~\footnote{$\tOh(\cdot)$ hides a term $(\log n)^{\Oh(1)}$.} local queries. Dell and Lapinskas~\cite{DellL18,toct/DellL21} used the {\sc Independent set} ({\sc IS}) oracle to estimate the number of edges in bipartite graphs, where an {\sc IS} oracle takes a subset $S$ of the vertex set as input and outputs whether $S$ is an independent set or not. Their algorithm for edge estimation in bipartite graphs makes polylogarithmic {\sc IS} queries and $\Oh(n)$\remove{~\footnote{$n$ and $m$ denote the number of vertices and edges in the input graph.}} edge existence queries. Beame et al.~\cite{BeameHRRS18,talg/BeameHRRS20} extended the above result for the edge estimation problem in bipartite graphs to general graphs, and showed that the edge estimation problem in general graphs can be solved using $\tOh\left(\min \{\sqrt{m},n^2/m\}\right)$ {\sc IS} queries. Chen et al.~\cite{CLW2019} improved this result to solve the edge estimation problem using only $\tOh\left(\min \{\sqrt{m},n/\sqrt{m}\}\right)$ {\sc IS} queries. \remove{Note that there is a large gap between the query complexities for the edge estimation problem in bipartite and general graphs, using {\sc IS} queries.}
}

\subsection{Fundamental role of coarse estimation \remove{in Beame \etal~\cite{BeameHRRS18, talg/BeameHRRS20} and Bhattacharya \etal~\cite{BhattaISAAC, BhattaTOCS}~vis-a-vis Dell \etal~\cite{soda/DellLM20}}}
\label{ssec:coarse}
The framework of Dell \etal~\cite{soda/DellLM20} is inspired by the following observation. Let us consider  $t=\Oh\left(\frac{\log n}{\eps^2}\right)$ independent subhypergraphs each induced by $n/2$ uniform random vertices. The probability, that a particular hyperedge is present in a subhypergraph induced by $n/2$ many uniform random vertices, is $\frac{1}{2^d}$. Denoting $X$ as the sum of the numbers of the hyperedges present in the $t$ subhypergraphs, observe that $\frac{2^d}{t}X$ is a $\left(1 \pm \eps\right)$-approximation of $m$. If we repeat the procedure recursively $\Oh(\log n)$ times, then all the subhypergraphs will have a bounded number of vertices in terms of $d$, at which point the number of hyperedges can be determined exactly by using $\Oh_d(1)$ \gpis queries. However, the number of induced subhypergraphs in the worst case can become as {\em large} as $\Omega\left((\log n)^{\log n}\right)$.

To have the number of subhypergraphs bounded at all point of time, they use 
 \emph{importance sampling}. It is about maintaining the weighted sum of some variables whose approximate
value is known to us. The output will be a bounded number of variables and some weight parameters
such that the weighted sum of the variables estimates the required sum. The objective of the
importance sampling procedure in Beame \etal~\cite{BeameHRRS18, talg/BeameHRRS20} and Bhattacharya \etal~\cite{BhattaISAAC, BhattaTOCS}, are also the same~\footnote{In fact, Bhattacharya \etal~\cite{BhattaISAAC, BhattaTOCS} directly use the importance sampling developed by Beame et al.~\cite{BeameHRRS18, talg/BeameHRRS20}}. However, Dell \etal~ 
improved the importance sampling result by the use of a particular form of Bernstein inequality and by a very careful analysis.

To apply importance sampling, it is required to have a rough estimate (possibly with a polylogarithmic approximation factor)  of the number of hyperedges in each subhypergraph that are currently present for processing -- this is what exactly coarse estimation does. The objective of coarse estimation in Beame \etal~\cite{BeameHRRS18, talg/BeameHRRS20} and Bhattacharya \etal~\cite{BhattaISAAC, BhattaTOCS} are also the same~\footnote{Note that the main merit of the framework of Dell \etal~\cite{soda/DellLM20} over Beame \etal~\cite{BeameHRRS18,talg/BeameHRRS20} and Bhattacharya \etal~\cite{BhattaISAAC, BhattaTOCS} is not only that it generalized to hypergraph, but also the dependence on $\eps$ is $1/\eps^2$ in Dell \etal~\cite{soda/DellLM20}'s work as opposed to $\frac{1}{\eps^4}$ in  Beame \etal~\cite{BeameHRRS18,talg/BeameHRRS20} and  Bhattacharya \etal~\cite{BhattaISAAC, BhattaTOCS}.}. 
But all these frameworks have a commonality. The approximation guarantee and the query complexity of the coarse estimation has a direct bearing on the query complexity of the final algorithm. 

\remove{
From the above discussion, it is clear that to improve the results with in the limits of the above discussed frameworks it is important to look at how the approximation factor and query complexity of coarse estimation can be improved, on which this paper focuses on. 
}
\remove{An improved coarse estimation was possibly the primary reason of Dell \etal's improvement over the result of Beame \etal. We also focus on an improvement in coarse estimation.}

Therefore, any improvement in the coarse estimation algorithm will directly improve the query complexities of \hest and \hsample. In this paper, we focus on improving the coarse estimation algorithm.

\subsection{Setup and notations}
\noindent We denote the sets $\{1,\ldots,n\}$ and $\{0,\ldots,n\}$ by $[n]$ and $[n^*]$, respectively. A hypergraph $\cH$ is a \emph{set system} $(U(\cH),\cF(\cH))$, where $U(\cH)$ denotes the set of vertices and $\cF(\cH)$ denotes the set of \remove{unordered} hyperedges. The set of vertices present in a hyperedge $F \in \cF(\cH)$ is denoted by $U(F)$ or simply $F$. A hypergraph $\cH$ is said to be $d$-uniform if all the hyperedges in $\cH$ consist of exactly $d$ vertices. The cardinality of the hyperedge set is $m(\cH)=\size{\cF(\cH)}$. For $\dsubset \subseteq U(\cH)$ (not necessarily pairwise disjoint), $\cF(\dsubset) \subseteq \cF(\cH)$ denotes the set of hyperedges having a vertex in each $A_i$, and $m(\dsubset)$ is the number of hyperedges in $\size{\cF(\dsubset)}$. \remove{For $u \in U(\cH)$, $\cF(u)$ denote the set of \remove{unordered} hyperedges that are incident on $u$. For $u \in U(\cH)$, the degree of $u$ in $\cH$, denoted as $\deg_{\cH}(u)=\size{\cF(u)}$ is the number of hyperedges incident on $u$. For a set $A$ and $a \in \N$, $A,\ldots,A$ ($a$ times) will be denoted as $A^{[a]}$. Let $\dsubset \subseteq U(\cH)$ be such that for every $i,j \in [d]$ either $A_i=A_j$ or $A_i \cap A_j =\emptyset$. 
This has a bearing on the \gpis{} oracle queries we make; either the sets we query with are disjoint, or are the same. Consider the following $d$-partite sub-hypergraph of $\cH$: $\left( U(\dsubset), \cF(\dsubset) \right)$ where the vertex set is $U(\dsubset) =\bigcup_{i=1}^d A_i$ and the hyperedge set is $\cF(\dsubset) = \left\{ \{ i_1, \ldots, i_d \}~|~ i_j \in A_j \right\}$; we will denote this $d$-partite sub-hypergraph of $\cH$ as $\cH(\dsubset)$. With this notation, $\cH\left( U^{[d]} \right)$ makes sense as a $d$-partite sub-hypergraph on a vertex set $U$.
The number of hyperedges in $\cH(\dsubset)$ is denoted by $m(\dsubset)$.}

Let $\E[X]$ and $\V[X]$ denote the expectation and variance of the random variable $X$. For an event $\cE$, the complement of $\cE$ is denoted by $\overline{\cE}$. The statement ``$a$ is a $(1 \pm \eps)$-approximation of $b$'' means $\size{b-a} \leq \eps \cdot b$. For $x \in \R$, $\exp(x)$ denotes the standard exponential function $e^x$. In this paper, $d$ is a constant, and $\Oh_d(\cdot)$ and $\Omega_d(\cdot)$ denote the standard $\Oh(\cdot)$ and $\Omega(\cdot)$, where the constant depends only on $d$. We use $\log ^{k} n$ to denote $(\log n)^k$. By polylogarithmic, we mean {$\Oh_d\left( \frac{\log^{\Oh(d)} n}{{\eps}^{\Omega(1)}}\right)$ in this paper.}

\subsection{Paper organization}
\label{ssec:paperorg}
\noindent
In Section~\ref{sec:prelim_actual}, we describe the notion of an {\em ordered hyperedge}, and define three other query oracles that can be simulated by using $\Oh_d(\log n)$ \gpis queries. The role of ordered hyperedges and these oracles are mostly expository purposes, i.e., they help us to describe our algorithms {and the calculations} more neatly. Section~\ref{sec:over} gives a brief overview of the proof of our main technical result. In Section~\ref{sec:coarse} we give the proof of our main result (Theorem~\ref{theo:main}).  We describe {in Section~\ref{sec:whyimp}} implications of our main result and how Theorem~\ref{theo:main} can be used to prove Theorem~\ref{coro:imp}. The equivalence proofs of the \gpis{} oracle and its variants are discussed in  Section~\ref{sec:prelim_actual}. Some useful probability results are given in Appendix~\ref{sec:prelim}. Since we use different types of oracles in the calculations, we have recalled all their definitions in Appendix~\ref{sec:oracle-def} for the ease of reference. 

\section{Preliminaries: Ordered hyperedges, \gpis oracle, and its variants} \label{sec:prelim_actual}

 \paragraph*{Ordered hyperedges} 
 
We will use the subscript $``o"$ to denote the set of ordered hyperedges. For example, $\cH_o(U,\cF_o)$ denotes the ordered hypergraph 
corresponding to $\cH(U,\cF)$. Here $\cF_o(\cH)$ denotes the set of ordered 
hyperedges that contains $d!$ ordered $d$-tuples for each hyperedge in 
$\cH(U,\cF)$. Let $m_o(\cH_o)$ denotes $\size{\cF_o(\cH_o)}$. Note that $m_o(\cH_o)=d!m(\cH)$. Also, let 
$\cF_o(\dsubset)$ denotes the set $\{F_o \in \cF_o(\cH):\mbox{the $i$-th vertex of $F_o$ is in 
$A_i, \forall i \in [d]$}\}$. The corresponding number for ordered hyperedges is 
$m_o(\dsubset)$. Note that $\cF_o(U(\cH),\ldots,U(\cH))=\cF_o(\cH)$. 

\paragraph*{\gpis oracle and its variants}

Note that the \gpis query takes as input $d$ pairwise disjoint subsets of vertices. We now define two related query oracles \gpis{}$ _1$ and \gpis{}$_2$ that remove the disjointness requirements for the input. Then we extent \gtwopis to the ordered setting. We show that both query oracles can be simulated, with high probability, by making $\Oh_d(\log n)$ queries to the \gpis oracle. The oracles \gpis{}$_1$ and \gpis{}$_2$ will be used in the description of the algorithm for ease of exposition.

\begin{description}
	\item[$\mbox{{\sc CID}}_1$:] Given $s$ pairwise disjoint subsets of vertices $A_1,\ldots,A_s \subseteq U(\cH)$ of a hypergraph $\cH$ and $a_1,\ldots,a_s \in [d]$ such that $\sum_{i=1}^{s}a_i =d$, \gonepis query on input $A_1^{[a_1]},A_2^{[a_2]},\cdots,A_s^{[a_s]}$ answers {\sc Yes} if and only if $m(\ssubset) \neq 0$. Here $A^{[a]}$ denotes the set $A$ repeated $a$ times.

	\item[$\mbox{{\sc CID}}_2$:] Given any $d$ subsets of vertices $\dsubset \subseteq U(\cH)$ of a hypergraph $\cH$, \gtwopis query on input $A_1,\ldots,A_d$ answers  {\sc Yes}  if and only if $m(\dsubset) \neq 0$.
	\item[$\mbox{{\sc CID}}_2^o$:] Given any $d$ subsets of vertices $\dsubset \subseteq U(\cH_o)$ of an ordered hypergraph $\cH_o$, $\mbox{{\sc CID}}_2^o$ query on input $A_1,\ldots,A_d$ answers  {\sc Yes}  if and only if $m_o(\dsubset) \neq 0$.
\end{description}

Observe that the \gtwopis query is the same as the \gpis query without the requirement that the input sets are disjoint. For the \gonepis query, multiple repetitions of the same set is allowed in the input. It is obvious that a \gpis query can be simulated by a \gonepis or \gtwopis query. Also, $\mbox{{\sc CID}}_2^o$ is the ordered analogue of \gtwopis. Using the following observation, we show how a $\mbox{{\sc CID}}_2^o$, \gonepis, or a \gtwopis query can be simulated by a polylogarithmic  number of \gpis queries. 

\begin{obs}[{\bf Connection between query oracles}] \label{obs:queryoracles}
Let $\cH(U,\cF)$ denote a hypergraph and $\cH_o(U,\cF_o)$ denote the corresponding ordered hypergraph.
\begin{itemize}
	\item[(i)] A \gonepis query to $\cH(U,\cF)$ can be simulated using $\Oh_d(\log n)$ \gpis queries with probability $1-1/\poly$.
	\item[(ii)] A \gtwopis query $\cH(U,\cF)$ can be simulated using $\Oh_d(1)$ \gonepis queries.
	\item[(iii)] A \gtwopis query $\cH(U,\cF)$ can be simulated using $\Oh_d(\log n)$ \gpis queries with  probability $1-1/\poly$.
	\item[(iv)] A $\mbox{{\sc CID}}_2^o$ query to  $\cH_o(U,\cF_o)$ can be simulated using a \gtwopis query to $\cH(U,\cF)$.
\end{itemize}
\end{obs}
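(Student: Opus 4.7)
The plan is to prove the four claims in the order (iv), (i), (ii), (iii): item (iv) is immediate from the definitions, (i) uses random partitioning of each repeated set, (ii) is a deterministic reduction by case analysis over ``membership types'' of vertices, and (iii) follows by composing (ii) and (i). For (iv), by construction $\cF_o$ contains every ordering of every hyperedge of $\cF$, so for any $d$ subsets $A_1,\ldots,A_d\subseteq U$ we have $m_o(A_1,\ldots,A_d)\neq 0$ if and only if there exists an unordered hyperedge of $\cH$ with one vertex in each $A_i$, i.e.\ $m(A_1,\ldots,A_d)\neq 0$; hence a single \gtwopis query on $(A_1,\ldots,A_d)$ resolves the $\mbox{{\sc CID}}_2^o$ query.

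For (i), given $A_1^{[a_1]},\ldots,A_s^{[a_s]}$ with $\sum_i a_i=d$, I would independently assign every vertex $v\in A_i$ a uniformly random label in $[a_i]$, thereby splitting $A_i$ into pairwise disjoint pieces $A_i^{(1)},\ldots,A_i^{(a_i)}$. Since the $A_i$ are themselves pairwise disjoint, the resulting $d$ pieces are pairwise disjoint, so I can issue a single \gpis query on them. I would repeat this procedure $t=C_d\log n$ times with fresh randomness and declare the $\mbox{{\sc CID}}_1$ answer to be {\sc Yes} iff at least one trial's \gpis query returns {\sc Yes}. If the true $\mbox{{\sc CID}}_1$ answer is {\sc No}, no trial can return {\sc Yes}. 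If the true answer is {\sc Yes}, fix any witnessing hyperedge $F$ with exactly $a_i$ of its vertices in $A_i$; in a single trial the probability that these $a_i$ vertices receive distinct labels for every $i$ equals $\prod_{i=1}^s a_i!/a_i^{a_i}\geq d^{-d}$, a positive constant depending only on $d$. The overall failure probability is therefore at most $(1-d^{-d})^t$, which drops below $1/\poly$ for a suitably large $C_d$.

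For (ii), for each $S\subseteq [d]$ let $B_S=\bigl(\bigcap_{i\in S} A_i\bigr)\setminus\bigl(\bigcup_{i\notin S} A_i\bigr)$; the nonempty $B_S$ form a partition of $A_1\cup\cdots\cup A_d$. The key observation is that a hyperedge $F$ witnesses the \gtwopis query on $(A_1,\ldots,A_d)$ if and only if there is a function $\sigma\colon[d]\to 2^{[d]}$ with $i\in\sigma(i)$ for every $i$ together with a bijection $\phi\colon[d]\to F$ such that $\phi(i)\in B_{\sigma(i)}$. Rewriting through the multiplicities $a_T=|\sigma^{-1}(T)|$, this holds if and only if the \gonepis query on the multiset $\{B_T^{[a_T]}:T\in\operatorname{Im}\sigma\}$ returns {\sc Yes}. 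I would enumerate the at most $(2^{d-1})^d=\Oh_d(1)$ such functions $\sigma$ and output {\sc Yes} iff some of the resulting \gonepis queries does.

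Finally, (iii) follows by composing (ii) and (i): a \gtwopis query reduces to $\Oh_d(1)$ \gonepis queries, each of which reduces to $\Oh_d(\log n)$ \gpis queries, and the $1/\poly$ failure probabilities are absorbed by a union bound. The main technical step is the probability calculation in (i); the one place I would be careful is fixing the constant $C_d$ large enough that, after a union bound over the $\Oh_d(1)$ inner calls of (ii) and over all the simulated oracle calls issued by the surrounding algorithm, the overall failure probability remains $1/\poly$.
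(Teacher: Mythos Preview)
Your proposal is correct and follows essentially the same approach as the paper. For (i) you use the identical random-partition idea (your probability bound $\prod_i a_i!/a_i^{a_i}$ is in fact slightly sharper than the paper's $\prod_i 1/a_i^{a_i}$, but both reduce to a $d^{-d}$ lower bound); for (ii) your atoms $B_S$ and functions $\sigma$ with $i\in\sigma(i)$ are exactly the paper's ``partition each $A_i$ by its intersection pattern with the other $A_j$'s'' and the product $\cP_1\times\cdots\times\cP_d$, yielding the same $2^{\Oh(d^2)}$ count; and (iii), (iv) are handled identically.
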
{
\begin{proof}
\begin{itemize}
	\item[(i)] Let the input of \gonepis query be $\ssubset$ such that $a_i \in [d]~\forall i \in [s]$ and $\sum\limits_{i=1}^s a_i =d$. We partition each $A_i$ randomly into $a_i$ parts $B_i^j$ for $j \in [a_i]$. We make a \gpis query with input $B_1^{1},\ldots,B_1^{a_1},\ldots, B_s^{1},\ldots,B_s^{a_s}$. Note that 
$$
\cF(B_1^{1},\ldots,B_1^{a_1},\ldots, B_s^{1},\ldots,B_s^{a_s}) \subseteq \cF(\ssubset).
$$

So, if \gonepis outputs {\sc `No'} to query $\ssubset$, then the above \gpis query will also report {\sc `No'} as its answer. If \gonepis answers {\sc `Yes'}, then consider a particular hyperedge $F \in \cF(\ssubset)$. Observe that 
\begin{eqnarray*}
&& \pr(\mbox{\gpis oracle answers {\sc `Yes'}})\\
&\geq& \pr(\mbox{$F$ is present in $\cF(B_1^{1},\ldots,B_1^{a_1}, \ldots \ldots, B_s^{1},\ldots,B_s^{a_s})
$})\\
	&\geq& \prod\limits_{i=1}^s \frac{1}{a_i ^{a_i}} \\ 
	&\geq& \prod\limits_{i=1}^s \frac{1}{d ^{a_i}}~~~~~~~~~~(\because a_i \leq d~\mbox{for all}~i\in [d])\\
	 &=& \frac{1}{d^{d}}~~~~~~~~~~(\because \sum\limits_{i=1}^s a_i =d)
\end{eqnarray*}

We can boost up the success probability arbitrarily by repeating the above procedure polylogarithmic times.
  
	\item[(ii)] Let the input to \gtwopis query be $\dsubset$. Let us partition each set $A_i$ into at most $2^{d-1}-1$ subsets depending on $A_i$'s intersection with $A_j$'s for $j \neq i$. Let $\cP_i$ denote the corresponding partition of $A_i$, $i \in [d]$. Observe that for any $i \neq j$, if we take any $B_i \in \cP_i$ and $B_j \in \cP_j$, then either $B_i=B_j$ or $B_i \cap B_j = \emptyset$.
  
For each $(B_1,\ldots,B_d) \in \cP_1 \times \ldots \times \cP_d$, we make a \gonepis query with input $(B_1,\ldots,B_d)$. Total number of such \gonepis queries is at most $2^{\Oh(d^2)}$, and we report {\sc `Yes'} to the \gtwopis query if and only if at least one \gonepis query, out of the $2^{\Oh(d^2)}$ queries, reports {\sc `Yes'}.
 
 \item[(iii)] It follows from (i) and (ii).
\end{itemize}

\end{proof}
}
\remove{
\paragraph*{The main lemma we are going to prove next.}
To prove Theorem~\ref{theo:main}, we first consider Lemma~\ref{lem:prob1}, which is the central result of the paper and is the ordered analogue of Theorem~\ref{theo:main}.  The main theorem (Theorem~\ref{theo:main}) follows from Lemma~\ref{lem:prob1} along with Observation~\ref{obs:queryoracles}.
\begin{lem}[{\bf Main Lemma}]
\label{lem:prob1}
 There exists an algorithm $\cAc$ that has \gpis query  access to a $d$-uniform ordered hypergraph $\cH_o(U,\cF_o)$ (corresponding to hypergraph $\cH(U,\cF)$) and returns $\widehat{m}_o$ as an estimate for $m_o=\size{\cF_o(\cH_o)}$ such that $\Omega _d\left(\frac{1}{\log ^{d-1} n}\right)\leq \frac{\widehat{m}_o}{m_o}\leq \Oh_d\left(\log ^{d-1} n\right)$
with probability at least $1-1/\poly(n)$. Moreover, the number of $\mbox{{\sc CID}}_2^o$ queries made by the algorithm is $\Oh_d\left(\log ^{d+1} n \right)$.
\end{lem}
}
\remove{Assuming Lemma~\ref{lem:prob1} to be true, we now prove Theorem~\ref{theo:main_restate}.
\begin{proof}[Proof of Theorem~\ref{theo:main_restate} ]
If $\eps \leq \lbeps$, we make a \gpis{} query with $(\{a_1\},\ldots,\{a_d\})$ for all distinct $a_1,\ldots,$ $a_d \in U(\cH)=U$ and enumerate by brute force the exact value of $m_o(\cH)$. So, we make at most $n^d=\Oh_d\left(\eps^{-4} \log^{5d+5} n \right)$ \gpis queries as $\eps \leq \lbeps$.
 If $\eps > \lbeps$, we use the algorithm corresponding to Lemma~\ref{lem:prob1}, where each query is either a \gonepis query or a \gtwopis query. However, by Observation~\ref{obs:queryoracles}, each \gonepis and \gtwopis query can be simulated by $\Oh_d(\log n)$ \gpis queries with high probability. So, we can replace each step of the algorithm, where we make either \gonepis or \gtwopis query, by $\Oh_d(\log n)$ \gpis queries. Hence, we are done with the proof of Theorem~\ref{theo:main_restate}.
\end{proof}}



\section{Overview of the main structural result}
\label{sec:over}


To prove Theorem~\ref{theo:main}, we first consider Lemma~\ref{lem:prob1}, which is the central result of the paper and is the ordered hypergraph analogue of Theorem~\ref{theo:main}.  The main theorem (Theorem~\ref{theo:main}) follows from Lemma~\ref{lem:prob1} along with Observation~\ref{obs:queryoracles}.
\begin{lem}[{\bf Main Lemma}]
\label{lem:prob1}
 There exists an algorithm $\cAc$ that has $\mbox{{\sc CID}}_2^o$ query  access to a $d$-uniform ordered hypergraph $\cH_o(U,\cF_o)$ corresponding to hypergraph $\cH(U,\cF)$ and returns $\widehat{m}_o$ as an estimate for $m_o=\size{\cF_o(\cH_o)}$ such that 
 $$
        \frac{1}{C_{d} \log ^{d-1} n} \leq \frac{\widehat{m}}{m}
        \leq C_{d}\log ^{d-1} n
    $$
    
       with probability at least $1-1/\poly$ using at most $C_{d}\log^{d+1} n$ $\mbox{{\sc CID}}_2^o$ queries, where $C_{d}$ is a constant that depends only on $d$.
\end{lem}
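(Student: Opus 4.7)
The plan is to design an algorithm $\cAc$ that performs a geometric search over $\Oh(\log n)$ candidate magnitudes for $m_o$, and analyze it via a combination of Markov's inequality and a Paley--Zygmund (second-moment) argument. For each scale $i \in \{0,1,\ldots,\lceil d\log_2 n\rceil\}$, $\cAc$ runs $R := C'_d \log^d n$ independent trials; in each trial, it samples $d$ subsets $S_1,\ldots,S_d \subseteq U$ by including each vertex independently with probability $p_i := 2^{-i/d}$, and issues a single $\mbox{{\sc CID}}_2^o(S_1,\ldots,S_d)$ query. Let $Y_i$ denote the number of YES answers at scale $i$, and set $\widehat{m}_o := 2^{i^\ast}$ where $i^\ast$ is the largest scale at which $Y_i$ exceeds a suitably calibrated threshold. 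The total query count is $(\lceil d\log_2 n\rceil+1)\cdot R = \Oh(\log^{d+1} n)$, matching the statement.

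Because the $S_j$'s are independent and the $d$ vertices of any ordered hyperedge are distinct, $\E[m_o(S_1,\ldots,S_d)] = m_o\cdot p_i^d = m_o/2^i$. The upper bound on $\widehat{m}_o$ then follows from Markov's inequality: when $2^i$ substantially exceeds $m_o$, $\pr(\text{YES}) \leq m_o/2^i$ is tiny, so a Chernoff bound over the $R$ trials yields a small $Y_i$ with the desired high probability, forcing $i^\ast$ not much larger than $\log_2 m_o$. The matching lower bound --- that $i^\ast$ is not much smaller than $\log_2 m_o$ --- is the crux and reduces, via Paley--Zygmund, to bounding $\E[m_o(S_1,\ldots,S_d)^2]$. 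Expanding the square, the contribution of a pair $(F_o,F_o')\in\cF_o^2$ is $p_i^{2d-\tau}$, where $\tau$ is the number of positions on which $F_o$ and $F_o'$ agree; classifying pairs by $\tau$ and combinatorially bounding each class, the goal is to show $\E[m_o(\vec S)^2]/\E[m_o(\vec S)]^2 \leq C_d \log^{d-1}n$ in the relevant regime, which translates to the claimed $\log^{d-1}n$-factor approximation.

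The main obstacle is precisely this second-moment bound in its full generality: in the worst case --- for instance, when a single vertex lies in almost all hyperedges --- overlap-heavy pairs dominate the variance and a naive analysis fails. To handle this I would either (i) augment the algorithm with multiple ``dimension masks,'' iterating over which coordinates are sampled at rate $p_i$ versus left as $U$, so that for every hypergraph some mask neutralizes the problematic overlap structure; or (ii) proceed by induction on $d$, reducing the $d$-uniform case to the $(d-1)$-uniform case by freezing one coordinate via the link hypergraph of a sampled vertex, with a trivial base case $d=1$ handled by binary search. In either route, the careful combinatorial accounting of how many pairs of ordered hyperedges share each positional pattern is what ultimately produces the $\log^{d-1}n$ factor in the approximation guarantee, while the $\log^{d+1}n$ query bound arises from taking $\Oh(\log n)$ scales with $\Oh(\log^d n)$ trials each.
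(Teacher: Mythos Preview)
Your proposal has a genuine gap. You correctly identify that the uniform-rate second-moment approach fails on heavy-overlap instances (your star example is exactly right: for a star on $m$ edges and $p_i\approx m_o^{-1/d}$, the ratio $\E[m_o(\vec S)^2]/\E[m_o(\vec S)]^2$ is polynomial in $n$, not $O_d(\log^{d-1} n)$), but neither of the two repairs you sketch is developed to the point of being a proof. Fix (i) with binary ``dimension masks'' ($p_i$ versus $1$) gives only $2^d$ rate profiles, and you give no argument that some mask makes Paley--Zygmund go through with a $\log^{d-1} n$ ratio; in fact the star already shows the second moment can remain polynomial under every binary mask, so Paley--Zygmund is the wrong tool even then. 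Fix (ii), induction via link hypergraphs, requires choosing a vertex to condition on with probability tied to its (unknown) degree, and you do not say how to do this or why it costs only $\log^{d+1} n$ queries.

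The paper does not use a second-moment argument at all. Its \textsc{Verify-Estimate} subroutine iterates over all $(d-1)$-tuples ${\bf j}=(j_1,\ldots,j_{d-1})\in[(d\log n)^*]^{d-1}$ of ``degree levels,'' and for each tuple samples $B_1,\ldots,B_d$ with \emph{coordinate-specific}, telescoping rates $p_1=\min\{2^{j_1}/\hat{\cR},1\}$, $p_i=\min\{2^{j_i-j_{i-1}}d\log n,1\}$ for $2\le i\le d-1$, and $p_d=\min\{2^{-j_{d-1}},1\}$, then asks a single $\mbox{\sc CID}_2^o$ query. The upper bound (Lemma~\ref{lem:coarse1}) is Markov, like yours. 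The lower bound (Lemma~\ref{lem:coarse2}) is a first-moment/bucketing argument: a recursive pigeonhole (Claim~\ref{clm:verify}) shows there exists a tuple ${\bf q}=(q_1,\ldots,q_{d-1})$ and nested ``heavy buckets'' $U_1(q_1), U_2((Q_1,A_1),q_2),\ldots$ such that, when ${\bf j}={\bf q}$, each $B_i$ intersects its heavy bucket with probability at least $1-1/e$, and chaining these events yields $\pr(\text{accept})\ge (1-1/e)^d>1/2^d$. Because this success probability is a \emph{constant} depending only on $d$, only $\Oh_d(\log n)$ repetitions per guess suffice, and the $\log^{d-1} n$ cost lives in the enumeration of level tuples inside one call to \textsc{Verify-Estimate}. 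This is exactly the step your proposal is missing: the variance obstacle is sidestepped by enumerating $\Theta(\log^{d-1} n)$ non-uniform rate profiles rather than by controlling a second moment.
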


At a high level, the idea for an improved coarse estimation involves a recursive bucketing technique and careful analysis of the intersection pattern of hypergraphs.

\remove{
Let us consider partitioning the vertices in $U_1=U(\cH)$ such that the vertices in each bucket have approximately (upto 2 factor of each other) the same number of hyperedges in $\cH_o$, having the vertices as the first vertex of the ordered hyperedge. So, there will be at most $d \log n +1$ buckets. Observe that there is a bucket $Z_1$ such that  the number of hyperedges, having the vertices in the bucket as the first vertex, is $\frac{m_o}{d \log n +1 }$. Let the number of hyperedges in $\cH_o$ having a vertex of the bucket $Z_1$ hyperedges) lies betwenn $2^{q_1}$ and $2^{q_1+1}-1$. Then we can argue that $\size{Z_1} \geq \frac{m_o}{2^{q_1+1}(d \log n +1)}$. Now consider a vertex $z \in Z_1$, and all the hyperedges in $\cH_o$ having $z$ as the first vertex, and let us partition the vertices in $U_2=U(\cH)$ into buckets such that the vertices in the same bucket have approximately the same number of hyperedges with $z$ as, as the first vertex

 Also, as the vertices in each bucket have approximately (up to 2 factor of each other) same number of hyperedges in $\cH_o$, as the first vertex, the bucket having at 

Note that $m_o=\size{\cF_o(\cH_o)}=\cF_o(U_1,\ldots,U_d)$, where $U_1=\ldots=U_d=U(\cH)$. }
To build up towards the final proof, we need to prove Lemma~\ref{lem:prob1}. Towards this end, we first define some quantities and prove Claim~\ref{clm:verify}. For that, let us think of partitioning the vertex set in $U_1=U(\cH)$ into buckets such that the vertices in each bucket appear as the first vertex in approximately the same number of hyperedges.  So, there will be at most $d \log n +1$ buckets. It can be shown that that there is a bucket $Z_1 \subseteq U_1$  such that  the number of hyperedges, having the vertices in the bucket as the first vertex, is at least  $\frac{m_o}{d \log n +1 }$. {For each vertex $z_1 \in Z_1$, let the number of hyperedges in $\cH_o$, having $z_1$ as the first vertex, lie between $2^{q_1}$ and $2^{q_1+1}-1$ for some suitable $q_1$.} Then we can argue that 
$$
    \size{Z_1} \geq \frac{m_o}{2^{q_1+1}(d \log n +1)}.
$$
Similarly, we extend the bucketing idea to tuples as follows. Consider a vertex $a_1$ in a particular bucket of $U_1$ and consider all the ordered hyperedges in $\cF_o(a_1)$ containing $a_1$ as the first vertex. We can bucket the vertices in $U_2=U(\cH)$  such that the vertices in each bucket of $U_2$ are present in approximately the same number of hyperedges in $\cF_o(a_1)$ as the second vertex. We generalize the above bucketing strategy with the vertices in $U_i$'s, which is formally described below. \remove{Then we state and prove Claim~\ref{clm:verify}.} Notice that this way of bucketing will allow us to use conditionals {on sampling vertices from the desired buckets of $U_i$'s}.

For $q_1 \in [(d \log n)^*]$, let $U_1(q_1) \subseteq U_1$ be the set of vertices in $a_1\in U_1$ such that for each $a_1 \in U_1(q_1)$, the number of hyperedges in $\cF_o(\cH_o)$, containing $a_1$ as the first vertex, lies between $2^{q_1}$ and  $ 2^{q_1+1}-1$. For $2 \leq i \leq d-1$, and $q_j \in [(d \log n)^*]$ for each $j \in [i-1]$, consider $a_1 \in U_1(q_1), a_2 \in U_2((q_1,a_1),q_2), \ldots,$ $ a_{i-1} \in U_{i-1}((q_1,a_1),\ldots,(q_{i-2},a_{i-2}),q_{i-1}) $. Let $U_i((q_1,a_1),\ldots, (q_{i-1},a_{i-1}),q_i)$ be the set of vertices in $U_i$ such that for each $u_i \in U_i((q_1,u_1),\ldots, (q_{i-1},a_{i-1}),q_i) $, the number of ordered hyperedges in $\cF_o(\cH_o)$, containing $u_j$ as the $j$-th vertex for all $j \in [i]$, lies between $2^{q_i}$ and  $ 2^{q_i+1}-1$. We need the following result to proceed further. For ease of presentation, we use $(Q_i,A_i)$ to denote $(q_1,a_1),\ldots, (q_{i-1},a_{i-1})$ for $ 2 \leq i \leq d-1$. {Informally, Claim~\ref{clm:verify} says that for each $i \in [d-1]$, there exists a bucket in $U_i$ having a \emph{large} number of vertices contributing approximately the same number of hyperedges.}.

\begin{cl}\label{clm:verify}
\begin{description}
	\item[(i)] There exists $q_1 \in [(d \log n)^*]$ such that 
	$$
	    \size{U_1(q_1)} > 
	    \frac{m_o(\cH_o)}{2^{q_1+1}(d\log n +1)}.
	$$
	\item[(ii)] Let $2 \leq i \leq d-1$ and $q_j \in [(d \log n)^*]~\forall j \in [i-1]$. Let $a_1 \in U_1(q_1)$, $a_{j} \in U_{j}((Q_{j-1},A_{j-1}),q_{j})$ $\forall j \neq 1$ and $j<i$. There exists $q_i \in [(d \log n)^*]$ such that $$\size{ U_i((Q_i,A_i),q_i)} > \frac{2^{q_{i-1}}}{2^{q_i +1}(d\log n +1)}.$$
\end{description}
\end{cl}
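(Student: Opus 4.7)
The plan is to prove both parts by a double-counting / pigeonhole argument that aggregates ordered hyperedges according to the position being bucketed and then converts the aggregate lower bound into a size bound via the per-vertex upper bound baked into the bucket definition. No new query is required; the claim is purely combinatorial about the partition into buckets $U_i(\cdots, q_i)$.

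For part~(i), I would start by observing that every vertex $a_1 \in U_1$ that appears as the first coordinate of at least one ordered hyperedge lies in exactly one $U_1(q_1)$ with $q_1 \in [(d\log n)^*]$. The range is valid because any fixed $a_1$ appears as the first vertex of at most $(d-1)!\binom{n-1}{d-1} \leq n^{d-1} \leq 2^{d\log n}$ ordered hyperedges, so $q_1 \leq d\log n$. Writing $T_{q_1}$ for the total number of ordered hyperedges whose first vertex lies in $U_1(q_1)$, we have $\sum_{q_1} T_{q_1} = m_o(\cH_o)$. Since the sum has at most $d\log n + 1$ nonzero terms, pigeonhole yields some index $q_1$ with $T_{q_1} \geq m_o(\cH_o)/(d\log n + 1)$. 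On the other hand, each vertex in $U_1(q_1)$ is the first vertex of strictly fewer than $2^{q_1+1}$ ordered hyperedges (by the definition of the bucket), so $T_{q_1} < |U_1(q_1)|\cdot 2^{q_1+1}$. Rearranging produces $|U_1(q_1)| > m_o(\cH_o)/(2^{q_1+1}(d\log n + 1))$, which is exactly the claimed bound.

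Part~(ii) is the same argument run conditionally on the prefix $(a_1,\ldots,a_{i-1})$. By the defining property of $a_{i-1}\in U_{i-1}((Q_{i-1},A_{i-1}),q_{i-1})$, the number of ordered hyperedges in $\cF_o(\cH_o)$ whose first $i-1$ vertices are exactly $a_1,\ldots,a_{i-1}$ (in order) is at least $2^{q_{i-1}}$. I would then partition these hyperedges according to their $i$-th vertex. Each such $i$-th vertex belongs to some bucket $U_i((Q_i,A_i),q_i)$, with $q_i \in [(d\log n)^*]$ (the range is again justified by $n^{d-i} \leq 2^{d\log n}$). Pigeonholing the at least $2^{q_{i-1}}$ extending hyperedges into at most $d\log n + 1$ buckets yields a $q_i$ for which the hyperedges whose $i$-th vertex lies in $U_i((Q_i,A_i),q_i)$ number at least $2^{q_{i-1}}/(d\log n + 1)$. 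Finally, each vertex of $U_i((Q_i,A_i),q_i)$ is, by definition of the bucket, the $i$-th vertex of strictly fewer than $2^{q_i+1}$ hyperedges extending $a_1,\ldots,a_{i-1}$, so dividing gives $|U_i((Q_i,A_i),q_i)| > 2^{q_{i-1}}/(2^{q_i+1}(d\log n + 1))$, as required.

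I do not foresee any serious obstacle: the argument is essentially two applications of pigeonhole, and the strict inequalities in the conclusion follow from the strict upper bound $2^{q_i+1}-1$ in the bucket definition. The only bookkeeping points worth double-checking are (a) that $q_i$ really ranges over $[(d\log n)^*]$ and not a larger set, which follows from $d$-uniformity and the bound $n^{d-i}$ on the number of extensions of an $i$-prefix, and (b) that the conditioning in part~(ii) is clean, namely that the lower bound $2^{q_{i-1}}$ on the number of hyperedges extending the prefix is exactly the quantity being pigeonholed in the next step.
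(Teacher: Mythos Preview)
Your proposal is correct and follows essentially the same approach as the paper: both parts are proved by summing the relevant ordered hyperedges over the $d\log n + 1$ buckets, applying pigeonhole to find a heavy bucket, and then dividing by the per-vertex upper bound $2^{q_i+1}$ from the bucket definition (for part~(ii), this is done conditionally on the prefix, using the lower bound $2^{q_{i-1}}$ on the number of extensions). Your extra remarks justifying that $q_i$ indeed ranges over $[(d\log n)^*]$ are a nice addition that the paper leaves implicit.
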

{\begin{proof}
\begin{itemize}
	\item[(i)] Observe that $m_o(\cH_o)=\sum\limits_{q_1=0}^{d\log n } m_o(U_1(q_1),U_2,\ldots,U_d)$. So, 
there exists $q_1 \in [(d \log n)^*]$ such that $m_o(U_1(q_1),U_2,\ldots,U_d) \geq \frac{m_o(\cH_o)}{d\log n + 1}$. From the definition of $U_1(q_1)$, $ m_o(U_1(q_1),U_2,\ldots,U_d) < \size{U_1(q_1)} \cdot 2^{q_1 +1}$. Hence, there exists $q_1 \in [(d \log n)^*]$ such that 
$$ 
	\size{U_1(q_1)} > \frac{m_o(U_1(q_1),U_2,\ldots,U_d)}{2^{q_1+1}} 
	\geq  \frac{m_o(\cH_o)}{2^{q_1+1} (d\log n+1)}.
$$
	\item[(ii)] {Note that}
	\begin{eqnarray*}
	 && m_o(\{a_1\},\ldots, \{a_{i-1}\},U_i,\ldots,U_d)\\
	&=& \sum_{q_i=0}^{d\log n } m_o(\{a_1\},\ldots, \{a_{i-1}\},U_i((Q_{i-1},A_{i-1}),q_i),\ldots,U_d).
	\end{eqnarray*} So, there exists $q_i \in [(d \log n)^*]$ such that 
	\begin{eqnarray*}
&& m_o(\{a_1\},\ldots, \{a_{i-1}\},U_i((Q_{i-1},A_{i-1}),q_i),\ldots,U_d) \\
&&~~~~~~~~~~~~~~~~~~~~ \geq \frac{m_o(\{a_1\},\ldots, \{a_{i-1}\},U_i,\ldots,U_d)}{d\log n +1}.
\end{eqnarray*}
From the definition of $U_i((Q_{i-1},A_{i-1}),q_i)$, we have $$m_o(\{a_1\},\ldots, \{a_{i-1}\},U_i((Q_{i-1},A_{i-1}),q_i),\ldots,U_d)  < \size{U_i((Q_{i-1},A_i),q_i)} \cdot 2^{q_i +1}$$ Hence, there exists $q_i \in [(d \log n)^*]$ such that 
\begin{eqnarray*}
	\size{U_i((Q_{i-1},A_i),q_i)} &>& \frac{m_o(\{a_1\},\ldots, \{a_{i-1}\},U_i((Q_{i-1},A_{i-1}),q_i),\ldots,U_d\})}{2^{q_i +1}}\\
	&\geq& \frac{m_o(\{a_1\},\ldots, \{a_{i-1}\},U_i,\ldots,U_d\})}{2^{q_i +1}(d\log n +1)}\\
	&\geq& \frac{2^{q_{i-1}}}{2^{q_i +1}(d\log n +1)}
\end{eqnarray*}
\end{itemize}
\end{proof}
 
 From Claim~\ref{clm:verify}, it follows that there exists $(q_1,\ldots,q_{d-1}) \in [(d 
 \log n)^{*}]^{d-1}$ such that  $\size{U_1(q_1)} > \frac{m_o(\cH_o)}{2^{q_1+1}
 (d\log n +1)}$ and $\size{ U_i((Q_i,A_i),q_i)} > \frac{2^{q_{i-1}}}{2^{q_i +1}
 (d\log n +1)}$. So, if we sample each vertex in $U_1$ with probability 
 $p_1=\min\{\frac{2^{q_1}}{m_o},1\}$ independently to generate $B_1$, each 
 vertex of $U_{i}$ ($2 \leq i \leq d-1$) with probability $p_i=\min\{2^{q_{i} - j_{i-1}} \cdot d\log n,1\}$ independently to generate $B_i$, and each vertex in $U_d$ with probability $\min\{2^{-q_{d-1}},1\}$ to generate $B_d$, then we can show that $\cF_o(B_1,\ldots,B_d)$ is nonempty with probability at least  $\prod_{i=1}^d p_i \geq \frac{1}{2^d}$. The success probability $\frac{1}{2^d}$ can be amplified by repeating the procedure suitable number of times. So, if we consider all possible $\Oh_d\left(\log ^{d-1} n\right)$ guesses for $(q_1,\ldots,q_{d-1})$, we have that there exists a guess for which  $\cF_o(B_1,\ldots,B_d)$  is nonempty, that is $m_o(B_1,\ldots,B_d)\neq 0$, and that can be determined by a $\mbox{{\sc CID}}_2^o$ query. In total, there will be $\Oh_d(\log ^{d-1} n)$ $\mbox{{\sc CID}}_2^o$ queries.
 
 However, the sampling probability $p_1$ to sample the vertices from $U_1$ depends on $m_o$. But we do not know $m_o$. Observe that the above procedure works even if we know any lower bound on $m_o$. So, the idea is to consider geometrically decreasing guesses for $m_o$ starting from $m_o=n^d$, and call the above procedure for the guesses until the $\mbox{{\sc CID}}_2^o$ query corresponding to the guess $\hat{\cR}$ for $m_o$ reports that $m_o(B_1,\ldots,B_d)\neq 0$. We will be able to achieve the desired result by showing that, for any guess at least a polylogarithmic factor more than the correct $m_o$, the corresponding $\mbox{{\sc CID}}_2^o$ queries over the samples report $m_o(B_1,\ldots,B_d)\neq 0$ with probability at most $\Oh\left(\frac{1}{2^d}\right)$. The success probability of $1-\Omega\left(\frac{1}{2^d}\right)$ can be amplified by repeating the procedure suitable number of times for each guess. In the next section, we formalize the discussion in this section.
 }
\section{Proof of Lemma~\ref{lem:prob1}} \label{sec:coarse}

 We now prove Lemma~\ref{lem:prob1} formally. The algorithm corresponding to Lemma~\ref{lem:prob1} is Algorithm~\ref{algo:coarse} (named $\cAc$). Algorithm~\ref{algo:verify} (named \verest) is a subroutine of Algorithm~\ref{algo:coarse}. Algorithm~\ref{algo:verify} determines whether a given estimate $\hat{R}$ {of the number of ordered hyperedges} is correct up to $\Oh_d(\log ^{2d-3} n)$ factor. Lemma~\ref{lem:coarse1} and~\ref{lem:coarse2} are intermediate results needed to prove Lemma~\ref{lem:prob1}; they bound the probability from above and below, respectively of \verest accepting the estimate $\hat{\cR}$.

\begin{lem} \label{lem:coarse1} If $\hat{\cR} \geq 20d^{2d-3}4^d ~ m_o(\cH_o) \log^{2d-3} n$, then 
	$$ \pr(\mbox{\verest($\cH_o,\hat{\cR}$) accepts {the estimate $\hat{R}$}}) \leq \frac{1}{20 \cdot 2^d}.$$
\end{lem}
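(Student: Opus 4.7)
The plan is to show that when $\hat{\cR}$ overshoots $m_o$ by more than a polylogarithmic factor, the random subsets $B_1,\ldots,B_d$ produced inside \verest contain no ordered hyperedge with high enough probability that, even after a union bound over all the routine's internal guesses, the overall acceptance probability is at most $1/(20\cdot 2^d)$. Rejecting requires only that every internal $\mbox{{\sc CID}}_2^o$ query return {\sc No}, so a first-moment calculation via Markov will suffice; no variance control is needed on this side.

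First I would unpack the sampling structure of \verest implicit in the overview of Section~\ref{sec:over}. For each guess $(q_1,\ldots,q_{d-1})\in[(d\log n)^{*}]^{d-1}$, the routine samples vertices of $U_1$ independently with probability $p_1=\min\{2^{q_1}/\hat{\cR},1\}$, vertices of $U_i$ with probability $p_i=\min\{2^{q_i-q_{i-1}}\cdot d\log n,1\}$ for $2\leq i\leq d-1$, and vertices of $U_d$ with probability $p_d=\min\{2^{-q_{d-1}},1\}$, then issues $\mbox{{\sc CID}}_2^o$ queries on the resulting tuple $(B_1,\ldots,B_d)$; the guess is ``accepted'' if any such query returns {\sc Yes}.

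Next I would bound, for a fixed guess, $\E[m_o(B_1,\ldots,B_d)]=m_o\prod_{i=1}^d p_i$ using independence. The crucial observation is that $\min\{x,1\}\leq x$ and the exponents $2^{q_i-q_{i-1}}$ telescope to $2^{q_{d-1}-q_1}$, which cancels against the $2^{q_1}$ from $p_1$ and the $2^{-q_{d-1}}$ from $p_d$, leaving $\prod_{i=1}^d p_i\leq (d\log n)^{d-2}/\hat{\cR}$. Markov then says this guess is accepted with probability at most $m_o(d\log n)^{d-2}/\hat{\cR}$.

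The final step is a union bound over the $(d\log n+1)^{d-1}$ guesses together with whatever constant-many internal repetitions \verest performs per guess to amplify the lower-bound side required for Lemma~\ref{lem:coarse2}. This gives total acceptance probability at most a constant times $m_o(d\log n)^{2d-3}/\hat{\cR}$. Substituting the hypothesis $\hat{\cR}\geq 20\, d^{2d-3}\,4^d\, m_o\log^{2d-3}n$ cancels the $\log^{2d-3}n$ factor exactly, folds the $d^{2d-3}$ into the polynomial-in-$d$ prefactor, and leaves a residual of at most $2^{d-1}/4^d=1/2^{d+1}$, comfortably below $1/(20\cdot 2^d)$. The only real obstacle is this bookkeeping, i.e.\ verifying that the telescoping in $\prod p_i$ truly collapses to $(d\log n)^{d-2}/\hat{\cR}$ and that the $4^d$ slack in the hypothesis constant absorbs the constant factor contributed by \verest's internal repetitions; everything else is a routine application of Markov plus a union bound.
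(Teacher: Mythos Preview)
Your approach is essentially identical to the paper's: bound $\prod_{i=1}^d p_i$ via the telescoping product to get $(d\log n)^{d-2}/\hat{\cR}$, apply Markov per guess, then union-bound over the $(d\log n+1)^{d-1}$ tuples ${\bf j}=(j_1,\ldots,j_{d-1})$. Two small corrections, though. First, \verest performs no internal repetitions --- the $\Gamma$ repetitions live in the outer $\cAc$ loop (Algorithm~\ref{algo:coarse}), not inside \verest --- so your union bound is purely over the $(d\log n+1)^{d-1}$ tuples and there is no extra constant factor to absorb. Second, your final arithmetic is off: $1/2^{d+1}=1/(2\cdot 2^d)$ is \emph{larger} than $1/(20\cdot 2^d)$, not smaller. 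The fix is that you dropped the factor $20$ from the hypothesis; carrying it through, the union bound gives at most
\[
(d\log n+1)^{d-1}\cdot\frac{m_o(d\log n)^{d-2}}{\hat{\cR}}
\;\leq\;\frac{2^{d-1}(d\log n)^{2d-3}\,m_o}{20\,d^{2d-3}4^d\,m_o\log^{2d-3}n}
\;=\;\frac{2^{d-1}}{20\cdot 4^d}
\;=\;\frac{1}{40\cdot 2^d}
\;\leq\;\frac{1}{20\cdot 2^d},
\]
which is exactly the paper's computation.
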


\begin{algorithm}[H]

\SetAlgoLined

\caption{\verest($\cH_o,\hat{\cR}$)}
\label{algo:verify}
\KwIn{\gpis query access to a $d$-uniform hypergraph $\cH_o(U, \cF)$ and a guess $\widehat{R}$ for the number of hyperedges in $\cH_o$.}
\KwOut{{\sc Accept} $\hat{\cR}$ or {\sc Reject} $\hat{\cR}$.}
Let

$~~~~~U_1=\ldots=U_d=U(\cH)$
\For{($j_1= d\log n$ to $0$)}
{
find $B_1 \subseteq U_1$ by sampling every element of $U_1$ with probability $p_1=\min\left\{\frac{2^{j_1}}{\hat{\cR}},1\right\}$ independently of other elements. \\
	\For{($j_2=d\log n$ to $0$)}
	{
	find $B_2 \subseteq U_2$ by sampling every element of $U_2$ with probability $p_2=\min\left\{2^{j_2 - j_{1}} \cdot d\log n,1\right\}$ independently of other elements.
	
		{$\vdots \vdots$} 

		\For{($j_{d-1}=d\log n$ to $0$)}
		{
			find $B_{d-1} \subseteq U_{d-1}$ by sampling every element of $U_{d-1}$ with probability $p_{d-1}=\min\{2^{j_{d-1} - j_{d-2}} \cdot d\log n,1\}$ independently of other elements.\\
			
			Let ${\bf j}=(j_1,\ldots,j_{d-1}) \in [(d\log n)^*]^{d-1}$\\
			Let $p(i,{\bf{  j}}) =p_i$, where $1 \leq i \leq d-1$\\
			Let $B(i, {\bf j})=B_i$, where $1 \leq i \leq d-1$\\
			
			find $B(d, {\bf j})=B_{d} \subseteq U_{d}$ by sampling every element of $U_{d}$ with probability $p_{d}=\min\left\{{2^{-j_{d-1}}},1\right\}$ independently of other elements.\\

			\If{$\left( m_o(B_{1,{\bf j}},\ldots, B_{d,{\bf j}}) \neq 0  \right)$} 
			{
				{\sc Accept} {/*[Note that $\mbox{{\sc CID}}_2^o$ query is called in the above line.]*/}
			}
		}
	}
}

{\sc Reject}
\end{algorithm}
\begin{proof} Consider the set of ordered hyperedges $\cF_o(\cH_o)$ in $\cH_o$. Algorithm \verest taking parameters $\cH_o$, and $\hat{\cR}$ and described in Algorithm \ref{algo:verify}, loops over all possible ${\bf j}=(j_1,\ldots,j_{d-1})\in [(d \log n)^*]^{d-1}$~\footnote{Recall that $[n]^*$ denotes the set $\{0,\ldots,n\}$.}. For each ${\bf j}=(j_1,\ldots,j_{d-1})\in [(d \log n)^*]^{d-1}$, \verest($\cH_o,\hat{\cR}$) samples vertices in each $U_i$ with \emph{suitable} probability values $p(i,{\bf j})$, depending on ${\bf j}$, $\hat{R}$, $d$ and $\log n$, to generate the sets $B_{i,{\bf j}}$ for $1\leq i\leq d$. See Algorithm \ref{algo:verify} for the exact values of $p(i,{\bf j})$'s. \verest($\cH_o,\hat{\cR}$) reports {\sc Accept} if there exists one ${\bf j} \in [(d \log n)^*]^{d-1}$ such that $m_o\left(B_{1,{\bf j}},\ldots, B_{d,{\bf j}}\right)\neq 0$. Otherwise, {\sc Reject} is reported by \verest($\cH_o,\hat{\cR}$). 

For an ordered hyperedge $F_o \in \cF_o(\cH_o)=\cF_o(U_1,\ldots,U_d)$ and ${\bf j} \in \left[(d\log n)^*\right]^{d-1}$. Note that 
$$
    U_1=\ldots=U_d=U(\cH).
$$
Let $X^{{\bf j}}_{F_o}$ denote the indicator random variable such that $X^{\bf j}_{F_o}=1$ if and only if $F_o \in \cF_o(B_{1,{\bf j}}, \ldots, B_{d,{\bf j}})$. Let 
$$
    X_{\bf j}=\sum\limits_{F_o \in \cF_o(\cH_o)} X^{\bf j}_{F_o}. 
$$
Note that $m_o(B_{1,{\bf j}}, \ldots, B_{d,{\bf j}}) = X_{\bf j}$. We have, 
\begin{align*}
    \pr\left( X^{\bf j}_{F_o} =1\right) &= \prod\limits_{i=1}^d (p(i,{\bf j}))~\remove{See Algorithm~\ref{algo:verify} for the values of $p(i,{\bf j})$'s} \\
    &\leq \frac{2^{j_1}}{\hat{\cR}} \cdot \frac{2^{j_2}}{2^{j_1}} d\log n \times \cdots\times \frac{2^{j_{d-1}}}{2^{j_{d-2}}} d\log n\times \frac{1}{2^{j_{d-1}}}\\
    &= \frac{d^{d-2}\log ^{d-2} n}{\hat{\cR}}
\end{align*}
Then, 
$$
    \E\left[X_{\bf j}\right] \leq \frac{m_o(\cH_o)}{\hat{\cR}} d^{d-2}\log^{d-2} n,
$$    
and since $X_{\bf j} \geq 0$, we have
$$
    \pr \left( X _{\bf j}\neq 0\right)=\pr(X_{\bf j} \geq 1) \leq \E \left[X_{\bf j} \right] \leq \frac{m_o(\cH_o)}{\hat{\cR}} d^{d-2}\log ^{d-2} n.
$$
Now, using the fact that $\hat{\cR} \geq 20d^{2d-3} \cdot 4^{d} \cdot m_o(\cH_o) \log^{2d-3} n$, we have 
$$ \pr\left(X_{\bf j} \neq 0 \right) \leq \frac{1}{20d^{d-1} \cdot 4^{d} \cdot \log^{d-1}  n}. $$

Recall that \verest accepts if and only if there exists ${\bf j}$ such that $X_{\bf j} \neq 0$~\footnote{Note that ${\bf j}$ is a vector but $X_{\bf j}$ is a scalar.}. Using the union bound, we get
\begin{align*}
	\pr\left(\mbox{\verest$(\cH_o,\hat{\cR})$ accepts {the estimate $\hat{R}$}}\right)
	&\leq  \sum\limits_{{\bf j} \in [(d\log n)^*]^{d-1}} \pr(X_{\bf j} \neq 0)\\ 
	 &\leq  \frac{(d\log n +1)^{d-1}}{20 \cdot 4^d\cdot (d\log n)^{d-1}} \\
	 &\leq  \frac{1}{20\cdot 2^d}. 
\end{align*}	
\end{proof}

\begin{lem} \label{lem:coarse2} If $\hat{\cR} \leq \frac{m_o(\cH_o)} {4d \log n}$, 
$\pr( \mbox{\verest ($\cH_o,\hat{\cR}$) accepts {the estimate $\hat{R}$}}) \geq \frac{1}{2^d}.$ \end{lem}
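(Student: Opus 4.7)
The plan is to exhibit a single iteration ${\bf j}^* = (q_1^*, \ldots, q_{d-1}^*)$ of \verest whose independently sampled tuple $(B_{1,{\bf j}^*}, \ldots, B_{d,{\bf j}^*})$ captures some ordered hyperedge with probability at least $1/2^d$. Since \verest accepts as soon as any iteration succeeds, lower-bounding the success probability at one specific ${\bf j}^*$ suffices.

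To choose ${\bf j}^*$, I would first invoke Claim~\ref{clm:verify}(i) to pick $q_1^*$ so that $V_1 := U_1(q_1^*)$ satisfies $|V_1| > m_o(\cH_o)/(2^{q_1^*+1}(d\log n+1))$. For $i = 2, \ldots, d-1$, Claim~\ref{clm:verify}(ii) guarantees, for every valid prefix $(a_1, \ldots, a_{i-1})$, the existence of some $q_i$ meeting the stated bucket-size lower bound; a pigeonhole argument over the $d\log n + 1$ possible values of $q_i$ then extracts a single $q_i^*$ witnessing the bound for at least a $1/(d\log n+1)$ fraction of the prefixes carried over from level $i-1$. This produces ${\bf j}^*$ together with a hierarchical family of good prefixes. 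Moreover, since every ordered hyperedge has a unique bucket profile and these profiles partition $\cF_o(\cH_o)$, a direct averaging ensures that the set $W^*$ of good hyperedges satisfies $|W^*| \geq m_o/(d\log n+1)^{d-1}$.

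At iteration ${\bf j}^*$ the sampling probabilities are exactly calibrated to $(q_1^*, \ldots, q_{d-1}^*)$, and the sets $B_{i,{\bf j}^*}$ are independent across $i$. Under the hypothesis $\hat{\cR} \leq m_o/(4d\log n)$, combining the sampling probabilities in Algorithm~\ref{algo:verify} with the bucket-size bounds gives $\mu_1 := p_1 |V_1| \geq m_o/(2\hat{\cR}(d\log n+1)) \geq 1$ at the root; a conditional intermediate-level expected count $\mu_i \geq d\log n/(2(d\log n+1))$ (bounded away from $0$); and $\mu_d \geq 1$ at the leaves since each good $(d-1)$-prefix has $\geq 2^{q_{d-1}^*}$ extensions while $p_d = 2^{-q_{d-1}^*}$. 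Combining these level-by-level via $1-(1-p)^n \geq 1 - e^{-pn}$ and the coordinate-wise independence of $(B_1, \ldots, B_d)$ gives a product-form lower bound on $\Pr[\mbox{success at ${\bf j}^*$}]$.

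The main obstacle is that the intermediate $\mu_i$'s are only just below $1/2$, so the naive product $(1-e^{-1})^2(1-e^{-1/2})^{d-2}$ falls slightly short of $1/2^d$ for large $d$. I expect to close this gap with a second-moment (Paley--Zygmund) argument applied to $X_{{\bf j}^*} := |\{F \in W^* : F \text{ captured}\}|$: the bound on $|W^*|$ combined with $\prod_i p(i,{\bf j}^*) = d^{d-2}\log^{d-2} n/\hat{\cR}$ yields $\E[X_{{\bf j}^*}] = \Omega(1)$, and the variance is controlled by summing $\Pr[F, F' \text{ both captured}] = (\prod_i p(i,{\bf j}^*)) \cdot \prod_{i : F, F' \text{ differ at position } i} p(i,{\bf j}^*)$ over ordered pairs $F, F'$, grouped by their pattern of shared vertices and bounded via the bucket degree bounds; this should give $\E[X^2] \leq C_d \E[X]$ and hence $\Pr[X \geq 1] \geq \E[X]/C_d \geq 1/2^d$.
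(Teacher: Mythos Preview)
Your approach diverges from the paper's in one structural respect that makes your route harder than necessary. You insist on fixing a single tuple ${\bf j}^* = (q_1^*, \ldots, q_{d-1}^*)$ \emph{before} any sampling happens, which is what forces the pigeonhole extraction (losing a $1/(d\log n+1)$ factor per level) and then the second-moment repair. The paper's argument is instead \emph{conditional and adaptive}: it fixes only $q_1$ in advance (via Claim~\ref{clm:verify}(i)), samples $B_{1}$, conditions on the event $\cE_1 = \{B_1 \cap U_1(q_1) \neq \emptyset\}$, picks a witness $a_1 \in B_1 \cap U_1(q_1)$, and \emph{only then} applies Claim~\ref{clm:verify}(ii) to this specific $a_1$ to obtain the appropriate $q_2$. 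Because Algorithm~\ref{algo:verify} has nested loops---$B_1$ is sampled once per value of $j_1$ and held fixed while all inner values of $j_2,\dots,j_{d-1}$ are tried---the iteration with $j_2=q_2(a_1)$ is visited, and its $B_2$ is fresh and independent of $B_1$. Iterating this conditional revelation gives $\Pr\bigl[\bigcap_i \cE_i\bigr] \geq \prod_i (1 - e^{-\mu_i})$ directly from the chain rule and coordinate-wise independence of the $B_i$'s, with no pigeonhole loss and no variance calculation.

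Your diagnosis that the intermediate exponents satisfy only $\mu_i \approx d\log n / (2(d\log n+1)) \approx 1/2$ rather than $1$ is accurate---the paper's displayed bound $\Pr[\overline{\cE_i}\mid \cdots] \leq e^{-1}$ for $2 \le i \le d-1$ is optimistic with the constants exactly as written. But this is a cosmetic slip, repaired for instance by scaling the intermediate sampling probabilities by a constant or by relaxing the $1/2^d$ to another constant depending only on $d$ (which is all that the downstream amplification in Algorithm~\ref{algo:coarse} needs). It does not call for a Paley--Zygmund argument. Your second-moment sketch is plausible in outline, but actually carrying out the variance bound---grouping pairs $(F,F')$ by shared-coordinate pattern and controlling each group via the bucket degree constraints on $W^*$---is substantially more work than the paper's conditional argument, and you have not given enough detail to see that $\E[X^2] \le C_d\,\E[X]$ really comes out with a usable $C_d$.
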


\begin{proof}

 We will be done by showing the following. \verest accepts with probability at least $1/5$ when the loop variables $j_1,\ldots,j_{d-1}$ respectively attain values $q_1,\ldots,q_{d-1}$ such that 
 $$
    \size{U_1(q_1)} > \frac{m_o(\cH_o)}{2^{q_1+1}(d\log n +1)}
$$ 
 and 
 $$
 \size{ U_i((Q_i,A_i),q_i)} > \frac{2^{q_{i-1}}}{2^{q_i +1}(d\log n +1)}
 $$
 for all $i \in [d-1] \setminus \{1\}$. The existence of such $j_i$s is evident from Claim~\ref{clm:verify}. Let ${\bf q}=(q_1,\ldots,q_{d-1})$. Recall that $B_{i,{\bf q}} \subseteq U_i$ is the sample obtained when the loop variables $j_1,\ldots,j_{d-1}$ attain values $q_1,\ldots,q_{d-1}$, respectively. Let $\cE_i, i \in [d-1],$ be the events defined as follows.
\begin{itemize}
	\item $\cE_1~:~U_1(q_1) \cap B_{1,{\bf q}} \neq \emptyset$.
	\item $\cE_i~:~U_j((Q_{j-1},A_{j-1}),q_j) \cap B_{j,{\bf q}} \neq \emptyset$, where $2\leq i \leq d-1$.
\end{itemize} 
As noted earlier, Claim~\ref{clm:verify} says that for each $i \in [d-1]$, there exists a bucket in $U_i$ having a \emph{large} number of vertices contributing approximately the same number of hyperedges. The above events correspond to the nonempty intersection of vertices in heavy buckets corresponding to $U_i$ and the sampled vertices $B_{i,{\bf j}}$, where $i \in [d-1]$.
Observe that 
\begin{align*}
    \pr(\overline{\cE_1}) &\leq \left( 1- \frac{2^{q_1}}{\hat{\cR}}\right)^{\size{U_1(q_1)}}\\
    &\leq \exp {\left(-\frac{2^{q_1}}{\hat{\cR}}\size{U_1(q_1)}\right)}\\ &\leq  \exp{\left(-\frac{2^{q_1}}{\hat{\cR}}\cdot \frac{m_o(\cH_o)}{2^{q_1+1}(d \log n +1)}\right)}\\
    &\leq \exp{(-1)}.
\end{align*}  
The last inequality uses the fact that $\hat{\cR} \leq \frac{m_o(\cH_o)} {4d \log n}$, from the condition of the lemma. Assume that $\cE_1$ occurs and $a_1 \in U_1(q_1) \cap B_{1,{\bf q}} $. We will bound the probability that $U_2(Q_1,A_1),q_2) \cap B_{2,{\bf q}} = \emptyset$, that is $\overline{\cE_2}$. Note that, by Claim~\ref{clm:verify} (ii), 
$$ 
    \size{U_2(Q_1,A_1),q_2)} \geq \frac{2^{q_{1}}}{2^{q_2 +1}(d\log n +1)}.
$$
So,
$$ \pr\left(\overline{\cE_2}~|~\cE_1\right) \leq \left(1-\frac{2^{q_2}}{2^{q_1}} \log n \right)^{\size{U_2(Q_1,A_1),q_2)}} \leq \exp{(-1)}$$
Assume that $\cE_1,\ldots,\cE_{i-1}$ hold, where $3 \leq i \in [d-1]$. Let $a_1 \in U_1(q_1)$ and $a_{i-1} \in A_{i-1}((Q_{i-2},U_{i-2}),q_{i-1})$. We will bound the probability that $U_i((Q_{i-1},A_{i-1}),q_i) ~\cap~ B_{i,{\bf q}} = \emptyset$, that is $\overline{\cE_i}$. Note that 
$$
    \size{U_i((Q_{i-1},A_{i-1}),q_i)} \geq \frac{2^{q_{i-1}}}{2^{q_{i} +1}(d\log n +1)}.
$$
So, for $3 \leq i \in [d-1]$,
$$ \pr\left( \overline{\cE_i}~|~\cE_1,\ldots,\cE_{i-1}\right) \leq \left(1-\frac{2^{q_i}}{2^{q_{i-1}}} \log n \right)^{\size{U_i(Q_{i-1},A_{i-1}),q_i)}} \leq  \exp{(-1)}$$
Assume that $\cE_1,\ldots,\cE_{d-1}$ hold. Let $a_1 \in U_1(q_1)$ and $a_{i-1} \in A_{i-1}((Q_{i-2},A_{i-2}),q_{i-1})$ for all $i \in [d] \setminus \{1\}$. Let $S \subseteq U_d$ be the set of $d$-th vertex of the ordered hyperedges in $\cF_o(\cH_o)$ having $u_j$ as the $j$-th vertex for all $j \in [d-1]$. Note that $\size{S} \geq 2^{q_{d-1}}$. Let $\cE_d$ be the event that represents the fact $S \cap B_{d,{\bf q}} \neq \emptyset$. So,
$$ \pr(\overline{\cE_d}~|~\cE_1,\ldots,\cE_{d-1}) \leq \left(1- \frac{1}{2^{q_{d-1}}} \right)^{q_{d-1}} \leq \exp{(-1)}$$ 
Observe that \verest accepts if $m(B_{1,{\bf q}},\ldots,B_{d,{\bf q}}) \neq 0$. Also, 
$$m_o(B_{1,{\bf q}},\ldots,B_{d,{\bf q}}) \neq 0~\mbox{if}~\bigcap\limits_{i=1}^d \cE_i~ \mbox{occurs}.$$ Hence,
\begin{align*}
	\pr(\mbox{\verest$(\cH_o, \hat{\cR})$ accepts}) &\geq \pr\left( \bigcap\limits_{i=1}^d \cE_i \right)\\
	&= \pr(\cE_1)\prod\limits_{i=2}^d \pr\Bigg(\cE_i~\Big|~\bigcap\limits_{j=1}^{i-1}\cE_j\Bigg)\\
	&> \left(1-\frac{1}{e}\right)^{d}\\
	&> \frac{1}{2^d}.
\end{align*}
\end{proof}

Now, we will prove Lemma~\ref{lem:prob1} that will be based on Algorithm~\ref{algo:coarse}. 
\begin{algorithm}[!h]
\caption{$\cAc (\cH_o (U, \cF_o))$}
\label{algo:coarse}
\KwIn{$\mbox{{\sc CID}}_2^o$ query access to a $d$-uniform hypergraph $\cH_o(U, \cF_o)$.}
\KwOut{An estimate $\hat{m}_o$ for $m_o=m_o(\cH_o)$.}
\For{$(~\hat{\cR}= n^d,n^d/2,\ldots, 1)$}
{
	Repeat \verest $(\cH_o,\hat{\cR})$ for $\Gamma=d\cdot 4^d \cdot 2000 \log n $ times.
	If more than $\frac{\Gamma}{10 \cdot 2^d}$ \verest accepts, then output ${\hat{m}_o}=\frac{\hat{\cR}}{d^{d-2} \cdot   2^d\cdot (\log n)^{d-2}}$.
}
\end{algorithm}
\begin{proof}[Proof of Lemma~\ref{lem:prob1}] Note that an execution of {\cAc} for a particular $\hat{\cR}$ repeats \verest for $\Gamma =d \cdot 4^d \cdot 2000 \log n$ times and gives output $\hat{\cR}$ if more than $\frac{\Gamma}{10 \cdot 2^d}$ \verest accepts. For a particular $\hat{\cR}$, let $X_i$ be the indicator random variable such that $X_i=1$ if and only if the  $i$-th execution of \verest accepts. Also take $X=\sum_{i=1}^\Gamma X_i$. {\cAc} gives output $\hat{\cR}$ if $X > \frac{\Gamma}{10 \cdot 2^d}$.

Consider the execution of {\cAc} for a particular $\hat{\cR}$. If $\hat{\cR}  \geq 20d^{2d-3} 4^d \cdot m_o(\cH_o) \cdot$ $\log ^{2d-3} n$, then we first show that {\cAc} does not accept with high probability. Recall Lemma~\ref{lem:coarse1}. If $\hat{\cR} \geq 20d^{2d-3}4^d \cdot m_o(\cH_o)\log ^{2d-3} n$, $\pr(X_i =1) \leq \frac{1}{20 \cdot 2^d}$ and hence $\E[X] \leq \frac{\Gamma}{20\cdot 2^d}$. By using Chernoff-Hoeffding's inequality~(See  {Lemma~\ref{lem:cher_bound2}}~(i) in Section~\ref{sec:prelim}), 

$$ \pr \left(X > \frac{\Gamma}{10 \cdot 2^d} \right) =\pr\left( X > \frac{\Gamma}{20 \cdot 2^d} + \frac{\Gamma}{20 \cdot 2^d}\right) \leq \frac{1}{n^{10d}}$$ 

Using the union bound for all $\hat{\cR}$, the probability that {\cAc} outputs some $\hat{m}_o=\frac{\hat{\cR}}{d^{d-2}\cdot 2^d}$ such that $\hat{\cR} \geq 20d^{2d-3}4^d \cdot m_o(\cH_o)\log^{2d-3}n$, is at most $\frac{d \log n}{n^{10}}$. Now consider the instance when the for loop in the algorithm {\cAc} executes for a $\hat{\cR}$ such that $\hat{\cR} \leq \frac{m_o(\cH_o)}{ 4d \log  n}$. In this situation, $\pr(X_i=1) \geq \frac{1}{2^d}$. So, $\E[X] \geq \frac{\Gamma}{2^d}$. By using Chernoff-Hoeffding's inequality~(See  {Lemma~\ref{lem:cher_bound2}}~(ii) in Section~\ref{sec:prelim}), 

$$ \pr\left(X \leq \frac{\Gamma}{10 \cdot 2^d } \right) \leq \pr\left(X < \frac{\Gamma}{2^d} -\frac{4}{5} \cdot \frac{\Gamma}{ 2^{d}} \right) 	\leq \frac{1}{{n^{100d}}}$$  

By using the union bound for all $\hat{\cR}$, the probability that {\cAc} outputs some $\hat{m}_o=\frac{\hat{\cR}}{d^{d-2}\cdot 2^d}$ such that $\hat{\cR} \leq \frac{m_o(\cH_o)}{ 4d \log  n}$, is at most $\frac{d \log n}{n^{100d}}$. Observe that, the probability that {\cAc} outputs some $\hat{m}_o=\frac{\hat{\cR}}{d^{d-2}\cdot 2^d}$ such that $\hat{\cR}\geq 20 d^{2d-3}4^d m_o(\cH_o)\log ^{2d-3} n$ or $\hat{\cR} \leq \frac{m_o(\cH_o)}{4d \log n}$, is at most 
$$
    \frac{d\log n}{n^{10d}} +\frac{d\log n}{n^{100d}} \leq \frac{1}{n^{8d}}.
$$
Putting everything together, {\cAc} gives some $\hat{m}_o=\frac{\hat{\cR}}{d^{d-2} \cdot 2^d \cdot (\log n)^{d-2}}$ as the output with probability at least $1-\frac{1}{n^{8d}}$ satisfying 

$$ \frac{m_o(\cH_o)}{8d^{d-1}2^d \log^{d-1} n} \leq \hat{m}_o\leq 20d^{d-1}2^d \cdot m_o(\cH_o) \log^{d-1} n$$ 

From the pseudocode of \verest (Algorithm~\ref{algo:verify}), we call for \gtwopis queries 
only at line number {12}. In the worst case, \verest executes line number 12 for each ${\bf j} \in [(d\log n)^*]$. That is, the query complexity of \verest is 
$\Oh(\log ^{d-1} n)$. From the description of {\cAc}, {\cAc} calls \verest 
$\Oh_d(\log n)$ times for each choice of $\hat{R}$. Hence, {\cAc} makes $
\Oh_d(\log ^{d+1} n)$ $\mbox{{\sc CID}}_2^o$ queries.
\end{proof}




\section{Proof of Theorem~\ref{coro:imp}
}
\label{sec:whyimp}

Before getting into the reasons \emph{why Theorem~\ref{coro:imp} follows from Theorem~\ref{theo:main}}, let us first review the algorithms for \hest and \hsample by Dell \etal ~\cite{soda/DellLM20}.

\paragraph*{Overview of  Dell \etal~\cite{soda/DellLM20}}

\remove{Recall that we den $n$ and $m$ be the number of vertices and hyperedges in the input $d$-uniform hypergraph.
\begin{obs}\label{obs:basic}
Let us consider a $\Oh\left(\frac{1}{\eps^2}\log n \right)$ independent subhypergraphs each induced by $n/2$ random vertices, selected uniformly and independently. Denoting $X$ by the sum of the number of subhypergrahs present in $\Oh\left(\frac{1}{\eps^2}\log n \right)$ subhypergraphs, observe that $\frac{2^d}{t}X$ is an $\left(1 \pm \eps\right)$-approximation of $m$.
\end{obs}

If we repeat the procedure recursively $\Oh(\log n)$ times, then all the subhypergraphs will have a bounded number of vertices in terms of $d$, in which the number of hyperedges can be determined exactly by using $f(d)$ queries. But the number of induced subhypergraphs will be very \emph{large} $\Omega\left((\log n)^{\log n}\right)$ at the end.

Though  Observation~\ref{obs:basic}  is simple, but their algorithm is involved, and it involves
} 

Dell \etal's algorithm for \hsample make repeated calls to \hest. Their algorithm for \hest calls mainly three subroutines over $\Oh_d(\log n)$ iterations:  {\sc Coarse}, {\sc Halving}, and {\sc Trim}. {\sc Halving} and {\sc Trim} calls {\sc Coarse} repeatedly. So, {\sc Coarse} is the main building block for their algorithms for \hest and \hsample.

\paragraph*{{\sc Coarse} algorithm} 

It estimates the number of hyperedges in the hypergraph up to polylog factors by using polylog queries. The result is formally stated as follows, see \cite[Sec.~4]{soda/DellLM20}.

\begin{lem}[{\bf {\sc Coarse Algorithm} by Dell \etal~\cite{soda/DellLM20}}]\label{pro:coarse}
There exists an algorithm {\sc Coarse}, that has \gpis query access to a hypergraph $\cH(U,\cF)$, makes $\Oh_d\left(\log ^{2d+3} n \right)$ \gpis queries, and finds $\hat{m}$ satisfying 
$$
    \Omega_d\left(\frac{1}{\log^d n}\right) \leq \frac{\hat{m}}{m} \leq \Oh_d\left(\log^d n\right)
$$ 
with probability at least $1-1/\poly$.
\end{lem}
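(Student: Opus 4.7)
The plan is to give two complementary routes. The most economical proof is simply to invoke Theorem~\ref{theo:main} directly: it supplies an estimate $\widehat{m}$ with $1/(C_d \log^{d-1} n) \leq \widehat{m}/m \leq C_d \log^{d-1} n$ using only $C_d \log^{d+2} n$ \gpis queries, which strictly dominates the lemma's claim since $\log^{d-1} n \leq \log^d n$ and $\log^{d+2} n \leq \log^{2d+3} n$ for every $d \geq 1$. So the lemma is an immediate corollary of our main structural result, and no further work is needed for the bound as stated.

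If instead one wanted to reprove the lemma from scratch (matching Dell et al.'s weaker parameters without exploiting our refinement), I would follow the bucket-and-verify template of Section~\ref{sec:coarse}, but with all $d$ coordinates treated symmetrically by random sampling rather than saving a logarithmic factor by effectively inverting the last coordinate. Concretely, I would partition each copy $U_i$ of $U(\cH)$ into $O(\log n)$ degree-style buckets $U_i((Q_{i-1},A_{i-1}), q_i)$ and extend Claim~\ref{clm:verify} to guarantee that for some profile $\mathbf{q} = (q_1,\ldots,q_d) \in [(d\log n)^*]^d$ the surviving bucket at every level is sufficiently heavy. For every candidate profile $\mathbf{j} \in [(d\log n)^*]^d$, sample $B_{i,\mathbf{j}} \subseteq U_i$ with probability tuned so that a hyperedge whose $i$-th vertex lives in the $j_i$-th bucket survives to $\cF_o(B_{1,\mathbf{j}},\ldots,B_{d,\mathbf{j}})$ with constant probability, and use a \gpis query to detect it.

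The main obstacle is calibrating the two tail bounds simultaneously. On the one hand, when the guess $\hat{R}$ overshoots $m$ by $\Omega_d(\log^d n)$, I need the expected number of surviving hyperedges per profile to be $o(\log^{-d} n)$ so that a union bound over all $O(\log^d n)$ profiles keeps the overall false-accept probability below a small constant, mirroring Lemma~\ref{lem:coarse1}. On the other hand, when $\hat{R} \leq m / \Omega_d(\log n)$, the correct profile from Claim~\ref{clm:verify} must fire with probability $\Omega(2^{-d})$, mirroring Lemma~\ref{lem:coarse2}. Because all $d$ sampling steps now pay a $\log n$ factor (unlike our $d-1$ in Theorem~\ref{theo:main}), the achievable gap between upper and lower tails is $O(\log^d n)$ rather than $O(\log^{d-1} n)$, which is why the approximation ratio degrades to $\log^d n$.

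Finally, I would wrap the verify routine in a geometric outer loop over candidates $\hat{R} \in \{n^d, n^d/2, \ldots, 1\}$ and a Chernoff-style $O(\log n)$ repetitions of the inner verify to boost each decision to error $1/\poly(n)$. The query budget is then $O(\log n)$ (outer) $\times$ $O(\log n)$ (repetitions) $\times$ $O(\log^d n)$ (profiles) $\times$ $O(\log n)$ (simulation overhead for variants of the oracle as in Observation~\ref{obs:queryoracles}) $=\Oh_d(\log^{2d+3} n)$, matching the stated complexity.
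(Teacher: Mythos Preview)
The paper does not prove Lemma~\ref{pro:coarse} at all: it is quoted verbatim as a black-box result of Dell \etal~\cite{soda/DellLM20} (the text even points the reader to Section~4 of that paper), and in Section~\ref{sec:whyimp} it is used only for expository comparison, not as an ingredient in any proof. So your first route---deducing the lemma as an immediate weakening of Theorem~\ref{theo:main}---is perfectly sound and non-circular (Theorem~\ref{theo:main} is established in Sections~\ref{sec:over}--\ref{sec:coarse} without touching Lemma~\ref{pro:coarse}), but it is not ``the paper's proof''; it is simply a different, shorter justification that happens to be available once the main theorem is in hand.

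Regarding your second route, the outline is reasonable but the final accounting is off. You write $O(\log n)\cdot O(\log n)\cdot O(\log^{d} n)\cdot O(\log n)=\Oh_d(\log^{2d+3} n)$, whereas that product is only $\Oh_d(\log^{d+3} n)$. The reason Dell \etal~actually land at $\log^{2d+3} n$, as the paper explains just below Lemma~\ref{pro:dell-est}, is that in their {\sc Verify Guess} the acceptance probability $p$ is itself only $\approx_d 1/\log^{d} n$, so each guess needs $\Oh_d(\log^{d+1} n)$ repetitions (not $\Oh_d(\log n)$) before Chernoff kicks in; together with an extra $\Oh_d(\log n)$ for the random $d$-partition step, this produces the factor $\log^{d+1} n\cdot \log^{d} n$ that your sketch is missing. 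Your hybrid scheme (the paper's bucketing extended to all $d$ coordinates) would in fact keep $p=\Omega_d(1)$ and therefore give a \emph{better} query bound than the lemma claims---so the lemma still follows, but the claim that your count ``matches'' $\log^{2d+3} n$ is incorrect.
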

\begin{rem}\label{rem:coarse}
The objective of {\sc Coarse} algorithm by Dell \etal ~is essentially same as that our {\cAc} algorithm. Both of them can estimate the number of hyperedges in any induced subhypergrah. However, note that {\cAc} (as stated in Theorem~\ref{theo:main}) has better approximation guarantee and better query complexity than that of {\sc Coarse} algorithm of Dell \etal~(as stated in Lemma~\ref{pro:coarse}).
\end{rem}

The framework of Dell \etal~implies that the query complexity of \hest and \hsample can be expressed by the approximation guarantee and the query complexity of the {\sc Coarse} algorithm. This is formally stated as follows:

\begin{lem}[{\bf \hest and \hsample in terms of quality of {\sc Coarse} algorithm~\cite{soda/DellLM20}}]\label{pro:dell-est}
Let there exists an algorithm {\sc Coarse}, that has \gpis query access to a hypergraph $\cH(U,\cF)$, makes $q$ \gpis queries, and finds $\hat{m}$ satisfying $\frac{1}{b} \leq \frac{\hat{m}}{m} \leq b$ with probability at least $1-1/\poly$. Then 
\begin{description}
\item[(i)]  \hest can be solved by using 
$$
    \Oh_d\left(\log ^2 n\left(\log nb+ \frac{b^2 \log^2 n}{\eps^2}\right)q \right)
$$ 
\gpis queries.

\item[(ii)] \hsample can be solved by using 
$$
    \Oh_d\left(\log ^6 n\left(\log nb+ \frac{b^2 \log^2 n}{\eps^2}\right)q \right)
$$ 
\gpis queries.

\end{description}

\end{lem}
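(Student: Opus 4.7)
The plan is to re-derive the query-complexity analysis of Dell \etal~\cite{soda/DellLM20} for \hest and \hsample, but carry the pair $(b,q)$ through as abstract parameters instead of substituting the specific values produced by their own {\sc Coarse} subroutine. The key observation is that {\sc Coarse} is used only as a black box in the two workhorse routines of their framework, \emph{Halving} and \emph{Trim}. Every other part of the framework, in particular the outer recursion depth, the Bernstein-based importance-sampling template, the union bound over levels, and the reduction from sampling to estimation, is oblivious to the internal design of {\sc Coarse}. So the proof reduces to locating $b$ and $q$ in the three places where they enter, and recomputing the arithmetic.

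First I would parameterize \emph{Halving}. Given a hypergraph $\cH'$ with $m'$ hyperedges, Halving invokes {\sc Coarse} on suitably chosen sub-hypergraphs to assign each sample an importance weight that is correct up to a multiplicative factor $b$. Feeding these weights into Bernstein's inequality, the sample complexity needed to obtain a $(1\pm\eps)$-estimate with probability $1-1/\poly$ is $\Oh_d\bigl(b^2\log^2 n/\eps^2\bigr)$; furthermore, before the importance-sampling step, Halving has to guess the correct bucket level (a dyadic scale of hyperedge counts) out of $\Oh(\log(nb))$ possibilities and union-bound over them, contributing an additive $\Oh_d(\log nb)$ term. Since each sample costs one call to {\sc Coarse}, the query complexity of a single Halving invocation is $\Oh_d\!\bigl(\log nb+b^2\log^2 n/\eps^2\bigr)\cdot q$. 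Trim admits the same parameterized bound up to constants, because it too is a Bernstein-style correction driven by {\sc Coarse} calls on the same kind of sub-hypergraphs.

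Next I would trace the outer loop. The \hest algorithm recursively halves the vertex set $\Oh(\log n)$ times, and within each recursion maintains $\Oh(\log n)$ active sub-hypergraphs whose counts are combined. Multiplying the per-invocation cost above by this $\Oh(\log^2 n)$ overhead gives the claimed bound for~(i). For~(ii), I would invoke the reduction from \hsample to \hest already spelled out in \cite{soda/DellLM20}: to produce one near-uniform hyperedge sample, one runs \hest on $\Oh_d(\log^4 n)$ sub-hypergraphs arranged in a tree so that conditional probabilities compose correctly. This multiplies the \hest query complexity by an extra $\log^4 n$, yielding the target bound for~(ii). Throughout, the success probability $1-1/\poly$ of each {\sc Coarse} call is strong enough to survive a union bound over the polylogarithmically many calls made by the whole framework.

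The main obstacle I anticipate is verifying that $b$ enters Dell \etal's analysis only in the two places listed above, and in particular only quadratically in the variance term and logarithmically in the bucket union bound. In their paper $b$ is a specific power of $\log n$ that is absorbed into other polylog factors, so isolating its contribution requires a careful re-reading of their Bernstein application and their sub-hypergraph weight assignment to make sure no hidden $b$-dependence has been swallowed into an $\Oh_d(\cdot)$. A secondary, more bookkeeping-level obstacle will be confirming that the correctness guarantees of Halving and Trim continue to hold when the input to {\sc Coarse} is a random sub-hypergraph rather than the original $\cH$, which is what the framework actually feeds it; this is already established in~\cite{soda/DellLM20} for their specific {\sc Coarse}, and I would argue that the proof goes through verbatim as long as the approximation guarantee $\tfrac{1}{b}\le \hat m/m\le b$ holds for every such sub-hypergraph with probability $1-1/\poly$, which is a hypothesis of the lemma.
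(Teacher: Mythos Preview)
The paper does not prove this lemma at all: it is stated with an explicit attribution to Dell \etal~\cite{soda/DellLM20} in its header and is used purely as a black box. The only ``proof'' the paper offers is the sentence preceding the lemma, namely that ``the framework of Dell \etal\ implies that the query complexity of \hest\ and \hsample\ can be expressed by the approximation guarantee and the query complexity of the {\sc Coarse} algorithm.'' The authors then simply plug their $(b,q)$ into the formula.

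Your proposal therefore goes well beyond what the paper does: you are sketching how one would actually extract the parameterized statement from \cite{soda/DellLM20} by tracing $b$ and $q$ through Halving, Trim, the $\Oh(\log^2 n)$ outer recursion, and the $\Oh(\log^4 n)$ sampling-to-estimation reduction. That is exactly the right plan, and the arithmetic you give matches the claimed bounds. The caveats you flag (isolating the quadratic-in-$b$ variance term in their Bernstein step, and confirming the {\sc Coarse} guarantee is invoked only on sub-hypergraphs where the hypothesis applies) are the genuine verification points, but they are about reading \cite{soda/DellLM20} carefully rather than about any new idea. In short, your approach is correct and strictly more detailed than the paper's, which simply defers to the cited reference.
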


\remove{
\begin{description}
\item[{\sc Coarse} algorithm:] It estimates the number of hyperedges in the hypergraph up to polylog factor by using polylog queries. The result is formally stated as follows:
\begin{pro}
There exists an algorithm {\sc Coarse}, that has \gpis query access to a hypergraph $\cH(U,\cF)$, makes $\Oh_d\left(\log ^{2d+3} n \right)$ \gpis queries, and finds $\hat{m}$ satisfying $\Omega_d\left(\frac{1}{(\log n)^d}\right) \leq \frac{\hat{m}}{m} \leq \Oh_d\left((\log n)^d\right)$ with probability at least $1-1/\poly(n)$.
\end{pro}
\begin{rem}\label{rem:coarse}
The objective of {\sc Coarse} of Dell \etal is essentially same as that of {\cAc} in our paper. Both of them can estimate the number of hyperedges in any induced subhypergrah. However, note that {\cAc} (as stated in Theorem~\ref{theo:main}) has better approximation gurantee than that of {\sc Coarse} of Dell \etal (as stated in Proposition~\ref{coarse}).
\end{rem}
\item[{\sc Halving} algorithm:] It takes a list $L$ of subhypergraphs of $\cH$ each having $2^k$ vertices, and produces a list $L'$ of subhypergraphs each having $2^{k-1}$ vertices, such that the estimation of $m$ by the list $L'$ is approximtely same as that of by the list $L$. Moreover, $\size{L'}$ is more than $\size{L}$ by a bounded quantity.
The result is formally stated as follows:
\begin{pro}
There exists an algorithm that takes a list $L$ subhypergraphs of $\cH$ each having $2^k$ vertices
\end{pro}
\item[Importance Sampling:] It is about maintaining the weighted sum of some variables whose approximate value is known to us. The output will be a bounded number of variables and some weight parameters such that the weighted sum of the variables estimates the required sum. The result is formally states as follows: 

\begin{pro}
Let $X_1, \ldots,X_N$ denote the set of variables with weights $w_1,\ldots,w_N$, respectively. Let $X=\sum\limits w_iX_i$. There exists an algorithm that takes $\zeta \in (0,1)$, and $w_i$ and $\hat{X}_i$ for each $i$ such that $\frac{1}{b}\leq \frac{\hat{X}_i}{X_i} \leq b$ as inputs, and finds $\{Y_1, \ldots, Y_r\}$ of $\{X_1,\ldots,X_N\}$  and $w'_1,\ldots,w'_r$ such that $r =\Oh\left(\log (nb)+\frac{b^2 \log n}{\zeta ^2}\right)$ $\sum\limits_{i=1}^r w_i'Y_i$ is an $(1 \pm \zeta)$-approximation to $X$.
\end{pro}
\end{description}  

As we have already mentioned, their algorithm runs for $\Oh_d(\log n)$ iterations. We maintain The invariance in each iteration is that the number of vertices present in each subhypergraph is the same (that is $\frac{n}{2^{i-1}}$~\footnote{Assume that $n$ is a power of $2$.} in iteration $i$) and the number of subhypergraphs we need to deal with is \emph{bounded} by . Also, in each iteration $i$, the list of subhypergraps  we For the first iteration $i=1$, note that we have only one subhypergraph, that is, $\cH$ itself. In the iteration $i$

\begin{itemize}
    \item Let $L_{i-1}$ be the list of subhypergraphs (each having $\frac{n}{2^{i-1}}$ vertices) after iteration $i-1$, where $i \geq 1$. The algorithm expands the list to find a list  $L_i$ of subhypergraphs each having $\frac{n}{2^i}$ vertices such that $\size{L_i} \leq \size{L_{i-1}}+\Oh_d\left(\frac{b^2 \log n}{\kappa^2}\right)$. Moreover, the estimate  of $m$ by $L_i$ is $(1\pm \kappa)$-approximation to the estimate of $m$ by $L_{i-1}$. It is easy to see that $\size{L_i} \leq \size{L_{i-1}} \log n$ by replacing each subhypergraph in $L_{i-1}$ by $\Oh\left(\frac{\log n}{\kappa^2}\right)$ subhypergraphs (See Observation~\ref{obs:basic}). But Dell \etal do not blindly expand an induced subgraph into $O(\log n)$ random subhypergraphs. Rather, they find a polylog approximation to the number of hyperedges in each subhypergraph and then expand each subhypergraph into a number smaller subhypergraphs, where the number depends on the polylog approximation obtained by coarse estimation procedure.
    \item Then we apply the algorithm corresponding to Importance Sampling 
\end{itemize}

 Whenever the number of induced subgraphs will be above a threshold, they apply {\sc Coarse} algorithms to get the $b$-factor approximations ($b=\Oh_d(\log ^ d n)$) to the number of hyperedges in each subgraph, and then apply Importance Sampling to keep the number of subhypergraphs (at any point of time) bounded above by $r =\Oh\left(\log (nb)+\frac{b^2 \log n}{\zeta ^2}\right)$. Additionally, Due to the course estimation guided expansion of each induced subhypergraps, the $\eps^{-4}$ term, in the query complexities of Beame et al.'s~\cite{talg/BeameHRRS20} Bhattacharya et al.'s~\cite{BhattaISAAC} papers, got improved to $\eps^{-2}$ by the authors.
 
The sampling procedure repeatedly uses the hyperedge estimation algorithm in different subhypergraphs as follows. The algorithm starts by finding a required estimate $X$ of the number of hyperedges in the entire hypergraph. Then the algorithm finds a random subset $S$ of $n/2$ vertices and finds a required estimate $X_1$ of the number of hyperedges in the subgraph induced by $S$. The algorithm selects $S$ with probability $\max\{0,1-X_1/X\}$, and repeats the above procedure sufficient times to find desired $S$ with high probability. Then the algorithm recurses on $S$ to find $S_1,\ldots,S_t$ until the number of vertices in $S_t$ is bounded by a function of $k$. Then the algorithm finds all the hyperedges in the subhypergraph induced by $S_t$ exhaustively and report one of them uniformly at random. The idea is simple but the calculation is tedious as the algorithm and analysis are concerning the approximate values of the number of hyperedges in the induced subhypergraphs. 
 }

\paragraph*{Why Theorem~\ref{coro:imp} follows from Theorem~\ref{theo:main}?}

Observe that we get Proposition~\ref{pro:dell} (the result of Dell \etal) from Lemma~\ref{pro:coarse} by substituting $b=\Oh_d\left(\log ^d n \right)$ and $q=\Oh_d\left(\log ^{2d+3} n\right)$ in Lemma~\ref{pro:dell-est}. In 
Theorem~\ref{coro:imp}
we improve on the Proposition~\ref{pro:dell} by using our main result (Theorem~\ref{theo:main}), and substituting $b=\Oh_d\left(\log ^{d-1} n\right)$ and $q=\Oh_d \left( \log ^{d+2} n \right)$ in Lemma~\ref{pro:dell-est}. 

The main reason we get an improved query complexity for hyperedge estimation in Theorem~\ref{coro:imp} as compared to Dell \etal~(Proposition~\ref{pro:dell-est}) is our {\sc Rough Estimation} algorithm is an improvement over the {\sc Coarse} algorithm of Dell \etal~\cite{soda/DellLM20} in terms of approximation guarantee as well as query complexity.

\paragraph*{How  our {\cAc} improves over {\sc Coarse} of Dell \etal~\cite{soda/DellLM20}?}


{At a very high level, the frameworks of our {\cAc} algorithm and that of Dell \etal's {\sc Coarse} algorithm might look similar, but the main ideas involved are different.} Our {\cAc} (as stated in Lemma~\ref{lem:prob1}) directly deals with the hypergraph (though the ordered one) and makes use of $\mbox{{\sc CID}}_2^o$ queries. Note that each $\mbox{{\sc CID}}_2^o$ query can be simulated by using $\Oh_d(\log n)$ \gpis queries. However, {\sc Coarse} algorithm of Dell \etal~considers $\Oh_d(\log n)$ independent random $d$-partite hypergraphs by partitioning the vertex set into $d$ parts uniformly at random, works on the $d$-partite hypergraphs, and reports the median, of the $\Oh_d(\log n)$ outputs corresponding to random $d$-partite subhypergrahs, as the final output. So, there is $\Oh_d (\log n)$ blowup in both our {\cAc} algorithm and Dell \etal's {\sc Coarse} algorithm, though the reasons behind the blowups are different.

Our {\cAc} calls repeatedly ($\Oh_d(\log n)$ times) {\sc Verify Estimate} for each guess, where the total number of guesses is $\Oh_d(\log n)$.  In the {\sc Coarse} algorithm, Dell \etal~uses repeated calls $\left(\Oh_d\left(\log ^{d+1} n\right)\right)$ times to an analogous routine of our {\sc Verify Estimate}, which they name {\sc Verify Guess},  $\Oh_d(\log n)$ times. Their {\sc Verify Guess} has the following criteria for any guess $M$:
\begin{itemize}
    \item 
        If $M \geq \frac{d^d \log^{2d} n}{2^{3d-1}}m$, {\sc Verify Guess} accepts $M$ with probability at most $p$;
    
    \item   
        If $M \leq m$, {\sc Verify Guess} accepts $M$ with probability at least $2p$;
    
    \item 
        It makes $\Oh_d\left(\log ^d n\right)$ \gpis queries.
\end{itemize}
Recall that the number of \gtwopis queries made by each call to {\sc Verify Estimate} is $\Oh_d(\log ^{d-1} n)$, that is, $\Oh_d\left(\log ^d n\right)$ \gpis queries. So, in terms of the number of \gpis queries, both our {\cAc} and {\sc Coarse} of Dell \etal~have the same complexity. 

The probability $p$ in {\sc Verify Guess} of Dell \etal~\cite{soda/DellLM20} satisfies $p \approx_{d} \frac{1}{\log ^ d n}$, where $\approx_d$ is used suppress the terms involving $d$. So, {for each guess $M$ , their {\sc Coarse} algorithm has to call $\Oh_d\left(\frac{1}{p} \log n \right)=\Oh_d\left(\log^{d+1} n\right)$ times to distinguish decide whether it is the case $M \leq m$ or  $M \geq \frac{d^d \log^{2d} n}{2^{3d-1}}m$, with a probability at least $1-1/\poly$.} So, the total number of queries made by the {\sc Coarse} algorithm of Dell \etal~\cite{soda/DellLM20} is 
$$\Oh_d(\log n)\cdot \Oh_d(\log n) \cdot \Oh_d\left(\log ^{d+1} n\right) \cdot \Oh_d\left(\log ^d n\right) =\Oh_d\left(\log ^{2d+3} n\right).$$

The first $\Oh_d(\log n)$ term is due to the blow up incurred to convert original hypergraph to $d$-partite hypergraph, the second $\Oh_d(\log n)$ term is due to the number of guesses for $m$, the third $\Oh_d\left(\log ^{d+1} n\right)$ term is the number of times {\sc Coarse} calls {\sc Verify Guess}, and the last term $\Oh_d\left(\log ^d n\right)$ is the number of \gpis queries made by each call to {\sc Verify Guess}.

As it can be observed from Lemmas~\ref{lem:coarse1} and~\ref{lem:coarse2}, $p$ in our case ({\sc Verify Estimate}) is $\Omega_d(1)$. So, it is enough for {\cAc} to call {\sc Verify Estimate} only $\Oh_d(\log n)$ times. Therefore, the number of \gpis queries made by our {\cAc} is 
$$
    \Oh_d(\log n) \cdot \Oh_d(\log n) \cdot \Oh_d(\log ^{d-1} n) \cdot \Oh_d(\log n)=\Oh_d(\log ^{d+2} n).
$$
In the above expression, the first $\Oh_d(\log n)$ term is due to the number of guesses for $m$, the second $\Oh_d\left(\log n\right)$ term is the number of times {\cAc} calls {\sc Verify Estimate}, the third $\Oh\left(\log ^{d-1} n\right)$ term is the number of \gtwopis queries made by each call to {\sc Verify Estimate}, and the last $\Oh_d(\log n)$ term is the number of \gpis queries needed to simulate a \gtwopis query with probability at least $1-1/\poly$.

We do the improvement in approximation guarantee as well as query complexity in {\sc Rough Estimation} algorithm (as stated in Theorem~\ref{theo:main}), as compared to {\sc Coarse} algorithm of Dell \etal~\cite{soda/DellLM20} (as stated in Lemma~\ref{pro:coarse}), by a careful analysis of the intersection pattern of the hypergraphs and setting the sampling probability parameters in {\sc Verify Estimate} (Algorithm~\ref{algo:verify}) algorithm in a nontrivial  way, which is evident from the description of Algorithm~\ref{algo:verify} and its analysis.



\newpage


\bibliographystyle{alpha}
\bibliography{reference}

\newcommand{\etalchar}[1]{$^{#1}$}
\begin{thebibliography}{BBGM21}

\bibitem[BBGM19]{BhattaISAAC}
Anup Bhattacharya, Arijit Bishnu, Arijit Ghosh, and Gopinath Mishra.
\newblock {Triangle Estimation Using Tripartite Independent Set Queries}.
\newblock In {\em Proceedings of the 30th International Symposium on Algorithms
  and Computation, {ISAAC}}, volume 149, pages 19:1--19:17, 2019.

\bibitem[BBGM21]{BhattaTOCS}
Anup Bhattacharya, Arijit Bishnu, Arijit Ghosh, and Gopinath Mishra.
\newblock {On Triangle Estimation Using Tripartite Independent Set Queries}.
\newblock {\em Theory Comput. Syst.}, 65(8):1165--1192, 2021.

\bibitem[BGK{\etalchar{+}}18]{BishnuGKM018}
Arijit Bishnu, Arijit Ghosh, Sudeshna Kolay, Gopinath Mishra, and Saket
  Saurabh.
\newblock {Parameterized Query Complexity of Hitting Set Using Stability of
  Sunflowers}.
\newblock In {\em Proceedings of the 29th International Symposium on Algorithms
  and Computation, {ISAAC}}, volume 123, pages 25:1--25:12, 2018.

\bibitem[BHR{\etalchar{+}}18]{BeameHRRS18}
Paul Beame, Sariel Har{-}Peled, Sivaramakrishnan~Natarajan Ramamoorthy, Cyrus
  Rashtchian, and Makrand Sinha.
\newblock {Edge Estimation with Independent Set Oracles}.
\newblock In {\em Proceedings of the 9th Innovations in Theoretical Computer
  Science Conference, {ITCS}}, volume~94, pages 38:1--38:21, 2018.

\bibitem[BHR{\etalchar{+}}20]{talg/BeameHRRS20}
Paul Beame, Sariel Har{-}Peled, Sivaramakrishnan~Natarajan Ramamoorthy, Cyrus
  Rashtchian, and Makrand Sinha.
\newblock {Edge Estimation with Independent Set Oracles}.
\newblock {\em {ACM} Trans. Algorithms}, 16(4):52:1--52:27, 2020.

\bibitem[DL18]{DellL18}
Holger Dell and John Lapinskas.
\newblock {Fine-Grained Reductions from Approximate Counting to Decision}.
\newblock In {\em Proceedings of the 50th Annual {ACM} {SIGACT} Symposium on
  Theory of Computing, {STOC}}, pages 281--288, 2018.

\bibitem[DL21]{toct/DellL21}
Holger Dell and John Lapinskas.
\newblock {Fine-Grained Reductions from Approximate Counting to Decision}.
\newblock {\em {ACM} Trans. Comput. Theory}, 13(2):8:1--8:24, 2021.

\bibitem[DLM20]{soda/DellLM20}
Holger Dell, John Lapinskas, and Kitty Meeks.
\newblock {Approximately counting and sampling small witnesses using a
  colourful decision oracle}.
\newblock In {\em Proceedings of the 2020 {ACM-SIAM} Symposium on Discrete
  Algorithms, {SODA}}, pages 2201--2211, 2020.

\bibitem[DP09]{DubhashiP09}
Devdatt~P. Dubhashi and Alessandro Panconesi.
\newblock {\em Concentration of Measure for the Analysis of Randomized
  Algorithms}.
\newblock Cambridge University Press, 2009.

\bibitem[ELRS17]{EdenLRS15}
Talya Eden, Amit Levi, Dana Ron, and C.~Seshadhri.
\newblock Approximately counting triangles in sublinear time.
\newblock {\em {SIAM} J. Comput.}, 46(5):1603--1646, 2017.

\bibitem[ERS20]{EdenRS18}
Talya Eden, Dana Ron, and C.~Seshadhri.
\newblock {On Approximating the Number of k-Cliques in Sublinear Time}.
\newblock {\em {SIAM} J. Comput.}, 49(4):747--771, 2020.

\bibitem[Fei06]{Feige06}
Uriel Feige.
\newblock {On Sums of Independent Random Variables with Unbounded Variance and
  Estimating the Average Degree in a Graph}.
\newblock {\em {SIAM} J. Comput.}, 35(4):964--984, 2006.

\bibitem[GR08]{GoldreichR08}
Oded Goldreich and Dana Ron.
\newblock {Approximating Average Parameters of Graphs}.
\newblock {\em Random Struct. Algorithms}, 32(4):473--493, 2008.

\end{thebibliography}
\newpage

\appendix
\section{Some probability results} \label{sec:prelim}

\begin{lem}[Chernoff-Hoeffding bound~\cite{DubhashiP09}]
\label{lem:cher_bound1}
Let $X_1, \ldots, X_n$ be independent random variables such that $X_i \in [0,1]$. For $X=\sum\limits_{i=1}^n X_i$ and $\mu=\E[X]$, the followings hold for any $0\leq \delta \leq 1$.
$$ \pr(\size{X-\mu} \geq \delta\mu) \leq 2\exp{\left(-\mu \delta^2/3\right)}$$

\end{lem}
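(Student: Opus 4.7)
The plan is to establish the two-sided bound $\pr(|X-\mu| \geq \delta\mu) \leq 2\exp(-\mu\delta^2/3)$ by separately proving the upper tail $\pr(X \geq (1+\delta)\mu) \leq \exp(-\mu\delta^2/3)$ and the lower tail $\pr(X \leq (1-\delta)\mu) \leq \exp(-\mu\delta^2/3)$, and then combining them via a union bound, which is where the factor of $2$ enters.

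For the upper tail I would deploy the standard exponential moment method. For any $t > 0$, Markov's inequality applied to $e^{tX}$ yields
$$\pr(X \geq (1+\delta)\mu) \;\leq\; \frac{\E[e^{tX}]}{e^{t(1+\delta)\mu}}.$$
Independence of the $X_i$ factorizes the moment generating function as $\E[e^{tX}] = \prod_{i=1}^n \E[e^{tX_i}]$, and since each $X_i \in [0,1]$, convexity of $x \mapsto e^{tx}$ on $[0,1]$ gives the pointwise bound $e^{tx} \leq 1 + x(e^t - 1)$; taking expectations and applying $1+y \leq e^y$ produces $\E[e^{tX_i}] \leq \exp(\E[X_i](e^t - 1))$. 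Optimizing the resulting exponent over $t > 0$ at $t = \ln(1+\delta)$ yields the classical Chernoff bound $\pr(X \geq (1+\delta)\mu) \leq \exp(-\mu\phi(\delta))$ with $\phi(\delta) = (1+\delta)\ln(1+\delta) - \delta$.

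The one nontrivial analytic step is the elementary inequality $\phi(\delta) \geq \delta^2/3$ for $\delta \in [0,1]$, which converts this Chernoff bound into the clean form stated in the lemma. I would verify this by setting $\psi(\delta) = \phi(\delta) - \delta^2/3$, noting $\psi(0) = 0$, and checking that $\psi'(\delta) = \ln(1+\delta) - 2\delta/3 \geq 0$ on $[0,1]$ via a quick analysis of $\psi''(\delta) = 1/(1+\delta) - 2/3$ together with the positive boundary value $\psi'(1) = \ln 2 - 2/3 > 0$. The lower tail is handled by an entirely symmetric argument with $t < 0$ (or equivalently applied to $-X_i$); it actually gives the slightly stronger $\exp(-\mu\delta^2/2)$, which is absorbed into $\exp(-\mu\delta^2/3)$. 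Summing the two tails yields the stated $2\exp(-\mu\delta^2/3)$. The main obstacle is thus purely analytic, the numerical comparison $\phi(\delta) \geq \delta^2/3$, while everything else is standard bookkeeping around Markov's inequality, independence, and MGF bounds for $[0,1]$-valued variables.
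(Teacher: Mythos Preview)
Your argument is correct and is the standard exponential-moment derivation of the multiplicative Chernoff--Hoeffding bound; the analytic check that $(1+\delta)\ln(1+\delta)-\delta \geq \delta^2/3$ on $[0,1]$ via the sign pattern of $\psi''$ and the endpoint values of $\psi'$ is carried out cleanly. Note, however, that the paper does not actually prove this lemma: it is stated in the appendix as a known inequality cited from Dubhashi and Panconesi~\cite{DubhashiP09}, so there is no ``paper's own proof'' to compare against---your write-up simply supplies the textbook argument that the paper omits.
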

\begin{lem}[Chernoff-Hoeffding bound~\cite{DubhashiP09}]
\label{lem:cher_bound2}
Let $X_1, \ldots, X_n$ be independent random variables such that $X_i \in [0,1]$. For $X=\sum\limits_{i=1}^n X_i$ and $\mu_l \leq \E[X] \leq \mu_h$, the followings hold for any $\delta >0$.
\begin{description}
\item[(i)] $\pr \left( X > \mu_h + \delta \right) \leq \exp{\left(-2\delta^2/n\right)}$.
\item[(ii)] $\pr \left( X < \mu_l - \delta \right) \leq \exp{\left(-2\delta^2 / n\right)}$.
\end{description}
\end{lem}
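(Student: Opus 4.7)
The plan is to prove both inequalities via the standard exponential-moment (Chernoff) method: bound the moment generating function of $X$ using independence together with a per-variable estimate (Hoeffding's lemma), and then optimize over the free parameter. Throughout, let $\mu = \E[X]$ and note that $\mu \in [\mu_l,\mu_h]$ by hypothesis.

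For part (i), I would fix an arbitrary $t>0$ and apply Markov's inequality to $e^{tX}$:
$$ \pr(X > \mu_h+\delta) \;=\; \pr\bigl(e^{tX} > e^{t(\mu_h+\delta)}\bigr) \;\le\; e^{-t(\mu_h+\delta)} \E[e^{tX}]. $$
Independence of the $X_i$'s gives $\E[e^{tX}] = \prod_{i=1}^n \E[e^{tX_i}]$. Next I would invoke Hoeffding's lemma: if $Y\in[a,b]$ with $\E[Y]=\nu$, then $\E[e^{t(Y-\nu)}] \le \exp(t^2(b-a)^2/8)$. Applied to each $X_i\in[0,1]$ with $\E[X_i]=p_i$, this yields $\E[e^{tX_i}] \le \exp(tp_i + t^2/8)$. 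Multiplying and using $\sum p_i = \mu \le \mu_h$ (and $t>0$) gives $\E[e^{tX}] \le \exp(t\mu_h + t^2 n/8)$, so
$$ \pr(X > \mu_h+\delta) \;\le\; \exp\!\left(-t\delta + \tfrac{t^2 n}{8}\right). $$
Minimizing the right-hand side in $t>0$ (the unique minimizer is $t=4\delta/n$) yields the claimed bound $\exp(-2\delta^2/n)$.

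Part (ii) reduces to part (i) by the symmetry trick $Y_i := 1-X_i \in [0,1]$. Setting $Y=\sum_i Y_i = n - X$, independence is preserved, and $\E[Y] = n-\mu \le n-\mu_l$, so $n-\mu_l$ plays the role of an upper-bound mean for $Y$. Since $\{X < \mu_l-\delta\} = \{Y > (n-\mu_l)+\delta\}$, applying part (i) to $Y$ with $\mu_h'=n-\mu_l$ gives exactly $\pr(X<\mu_l-\delta) \le \exp(-2\delta^2/n)$.

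The main obstacle is establishing Hoeffding's lemma itself, which is the only nontrivial ingredient. I would prove it by convexity: for $y\in[a,b]$, $e^{ty} \le \tfrac{b-y}{b-a}e^{ta} + \tfrac{y-a}{b-a}e^{tb}$; taking expectation of $e^{t(Y-\nu)}$ reduces the bound to showing $\varphi(s) := -s\,p + \log(1-p+p e^{s}) \le s^2/8$ for all $s\in\mathbb{R}$, where $p=(\nu-a)/(b-a)$ and $s=t(b-a)$. This follows from a second-order Taylor expansion of $\varphi$ around $0$: $\varphi(0)=\varphi'(0)=0$ and $\varphi''(s) = \frac{(1-p)p e^{s}}{(1-p+p e^{s})^2} \le 1/4$ by the AM-GM inequality applied to $1-p+pe^s$ and $(1-p)pe^s$. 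Integrating twice gives $\varphi(s) \le s^2/8$, completing the proof.
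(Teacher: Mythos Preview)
Your proof is correct and is the standard Hoeffding argument. Note, however, that the paper does not supply its own proof of this lemma: it is stated in the appendix as a known concentration inequality and attributed to the cited reference, so there is no ``paper's proof'' to compare against. Your write-up would serve perfectly well as a self-contained justification.
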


\section{Oracle definitions} \label{sec:oracle-def}

\begin{defi}
[Independent set query ({\sc IS})~\cite{BeameHRRS18}] Given a subset $A$ of the vertex set $V$ of a graph $G(V,E)$, {\sc IS} query answers whether $A$ is an independent set.
\end{defi}

\begin{defi}
[Bipartite independent set oracle (\bis)~\cite{BeameHRRS18}] Given two disjoint subsets $A,B$ of the vertex set $V$ of a graph $G(V,E)$, \bis query reports  whether there exists an edge having endpoints in both $A$ and $B$.
\end{defi}

\begin{defi}
[Tripartite independent set oracle (\tis)~\cite{BhattaISAAC}] Given three disjoint subsets $A,B,C$ of the vertex set $V$ of a graph $G(V,E)$, the {\sc TIS} oracle reports whether there exists a triangle having endpoints in $A,B$ and $C$.
\end{defi}

\begin{defi}
[Generalized $d$-partite independent set oracle (\gpis)~\cite{BishnuGKM018}] Given $d$ pairwise disjoint subsets of vertices $\dsubset \subseteq U(\cH)$ of a hypergraph $\cH$ as input, \gpis{} query answers whether $m(\dsubset) \neq 0$, where $m(\dsubset)$ denotes the number of hyperedges in $\cH$ having exactly one vertex in each $A_i$, $\forall i \in \{1,2,\ldots,d\}$. \end{defi}

\begin{defi}
[\gonepis oracle] Given $s$ pairwise disjoint subsets of vertices $A_1,\ldots,A_s \subseteq U(\cH)$ of a hypergraph $\cH$ and $a_1,\ldots,a_s \in [d]$ such that $\sum_{i=1}^{s}a_i =d$, \gonepis query on input $A_1^{[a_1]},A_2^{[a_2]},\cdots,A_s^{[a_s]}$ answers whether $m(\ssubset) \neq 0$.
\end{defi}

\begin{defi}
[\gtwopis oracle] Given any $d$ subsets of vertices $\dsubset \subseteq U(\cH)$ of a hypergraph $\cH$, \gtwopis query on input $A_1,\ldots,A_d$ answers whether $m(\dsubset) \neq 0$.
\end{defi}

\begin{defi}
[$\mbox{{\sc CID}}_2^o$ oracle] Given any $d$ subsets of vertices $\dsubset \subseteq U(\cH_o)$ of an ordered hypergraph $\cH_o$, $\mbox{{\sc CID}}_2^o$ query on input $A_1,\ldots,A_d$ answers  {\sc Yes}  if and only if $m_o(\dsubset) \neq 0$.
\end{defi}

\remove{
\begin{obs}[{\bf Connection between query oracles}: Observation~\ref{obs:queryoracles} restated] \label{obs:append_queryoracles}
Let $\cH(U,\cF)$ denote a hypergraph and $\cH_o(U,\cF_o)$ denote the corresponding ordered hypergraph.
\begin{itemize}
	\item[(i)] A \gonepis query to $\cH(U,\cF)$ can be simulated using $\Oh_d(\log n)$ \gpis queries with probability $1-1/\poly$.
	\item[(ii)] A \gtwopis query $\cH(U,\cF)$ can be simulated using $2^{\Oh(d^2)}$ \gonepis queries.
	\item[(iii)] A \gtwopis query $\cH(U,\cF)$ can be simulated using $\Oh_d(\log n)$ \gpis queries with  probability $1-1/\poly$.
	\item[(iv)] A $\mbox{{\sc CID}}_2^o$ query to  $\cH_o(U,\cF_o)$ can be simulated using a \gtwopis query to $\cH(U,\cF)$.
\end{itemize}
\end{obs}

\begin{proof}
\begin{itemize}
	\item[(i)] Let the input of \gonepis query be $\ssubset$ such that $a_i \in [d]~\forall i \in [s]$ and $\sum\limits_{i=1}^s a_i =d$. For each $i \in [s]$, we partition $A_i$ (only one copy of $A_i$, and not $a_i$ copies of $A_i$) randomly into $a_i$ parts, let $\{B_i^j:j \in [a_i]\}$ be the resulting partition of $A_i$.   Then we make a \gpis query with input $B_1^{1},\ldots,B_1^{a_1},\ldots, B_s^{1},\ldots,B_s^{a_s}$. Note that 
$$
\cF(B_1^{1},\ldots,B_1^{a_1},\ldots, B_s^{1},\ldots,B_s^{a_s}) \subseteq \cF(\ssubset).
$$

So, if \gonepis outputs {\sc `No'} to query $\ssubset$, then the above \gpis query will also report {\sc `No'} as its answer. If \gonepis answers {\sc `Yes'}, then consider a particular hyperedge $F \in \cF(\ssubset)$. Observe that 
\begin{eqnarray*}
&& \pr(\mbox{\gpis oracle answers {\sc `Yes'}})\\
&\geq& \pr(\mbox{$F$ is present in $\cF(B_1^{1},\ldots,B_1^{a_1}, \ldots \ldots, B_s^{1},\ldots,B_s^{a_s})
$})\\
	&\geq& \prod\limits_{i=1}^s \frac{1}{a_i ^{a_i}} \\ 
	&\geq& \prod\limits_{i=1}^s \frac{1}{d ^{a_i}}~~~~~~~~~~(\because a_i \leq d~\mbox{for all}~i\in [d])\\
	 &=& \frac{1}{d^{d}}~~~~~~~~~~(\because \sum\limits_{i=1}^s a_i =d)
\end{eqnarray*}

We can boost up the success probability arbitrarily by repeating the above procedure polylogarithmic times.
  
	\item[(ii)] Let the input to \gtwopis query be $\dsubset$. Let us partition each set $A_i$ into at most $2^{d-1}-1$ subsets depending on $A_i$'s intersection with $A_j$'s for $j \neq i$. Let $\cP_i$ denote the corresponding partition of $A_i$, $i \in [d]$. Observe that for any $i \neq j$, if we take any $B_i \in \cP_i$ and $B_j \in \cP_j$, then either $B_i=B_j$ or $B_i \cap B_j = \emptyset$.
  
For each $(B_1,\ldots,B_d) \in \cP_1 \times \ldots \times \cP_d$, we make a \gonepis query with input $(B_1,\ldots,B_d)$. Total number of such \gonepis queries is at most $2^{\Oh(d^2)}$, and we report {\sc `Yes'} to the \gtwopis query if and only if at least one \gonepis query, out of the $2^{\Oh(d^2)}$ queries, reports {\sc `Yes'}.
 
 \item[(iii)] It follows from (i) and (ii).
 \item[(iv)] It follows from the definitions of ordered hypergraph and query oracles.
\end{itemize}

\end{proof}
}
\remove{
To prove Theorem~\ref{theo:main_restate}, first consider the following Lemma.
\begin{lem}
\label{lem:prob1}
Let $\cH$ be a hypergraph with $\size{U(\cH)}=n$. For any $\eps > \lbeps$, \hest can be solved with probability $1-\frac{1}{n^{4d}}$ and using $\Oh\left( \frac{\log ^{5d+4} n}{\eps^4}\right)$ queries, where each query is either a \gonepis query or a \gtwopis query.
\end{lem}
Assuming Lemma~\ref{lem:prob1} holds, we prove Theorem~\ref{theo:main_restate}.
\begin{proof}[Proof of Theorem~\ref{theo:main_restate} ]
If $\eps \leq \lbeps$, we query for $m(\{a_1\},\ldots,\{a_d\})$ for all distinct $a_1,\ldots,$ $a_d \in U(\cH)=U$ and compute the exact value of $m_o(\cH)$. So, we make at most $n^d=\Oh_d\left( \frac{\log ^{5d+5} n}{\eps^4}\right)$ \gpis queries as $\eps \leq \lbeps$.
 If $\eps > \lbeps$, we use the algorithm corresponding to Lemma~\ref{lem:prob1}, where each query is either a \gonepis query or a \gtwopis query. However, by Observation~\ref{obs:queryoracles}, each \gonepis and \gtwopis query can be simulated by $\Oh_d(\log n)$ \gpis queries with high probability. So, we can replace each step of the algorithm, where we make either \gonepis or \gtwopis query, by $\Oh_d(\log n)$ \gpis queries. Hence, we are done with the proof of Theorem~\ref{theo:main_restate}.
\end{proof}
In the rest of the paper, we mainly focus on proving Lemma~\ref{lem:prob1}.
}

\remove{
\section{The flowchart for the algorithm}
\label{append:flowchart}
The algorithmic framework we use involves \emph{sparsification}, \emph{coarse and exact estimation} and \emph{sampling} as in~\cite{BeameHRRS18}, but there exists no easy generalization of Beame et al.'s~\cite{BeameHRRS18} edge estimation to hyperedge estimation mostly because edges intersect in at most one vertex whereas hyperedge intersections can be arbitrary. In Figure~\ref{fig:flowchart}, we give a \emph{flowchart} of the algorithm.

\begin{figure}[h!]
	\centering
	\includegraphics[width=1\linewidth]{flowchart}
	\caption{Flow chart of the algorithm. The highlighted texts indicate the basic building blocks of the algorithm. We also indicate the corresponding lemmas that support the building blocks.}
https://www.overleaf.com/project/61473aec0c8f05060f86e16d	\label{fig:flowchart}
\end{figure}
}
\remove{
\begin{itemize}
	\item[(i)] Observe that $m_o(\cH_o)=\sum\limits_{q_1=0}^{d\log n } m_o(U_1(q_1),U_2,\ldots,U_d)$. So, 
there exists $q_1 \in [(d \log n)^*]$ such that $m_o(U_1(q_1),U_2,\ldots,U_d) \geq \frac{m_o(\cH_o)}{d\log n + 1}$. From the definition of $U_1(q_1)$, $ m_o(U_1(q_1),U_2,\ldots,U_d) < \size{U_1(q_1)} \cdot 2^{q_1 +1}$. Hence, there exists $q_1 \in [(d \log n)^*]$ such that 
$$ 
	\size{U_1(q_1)} > \frac{m_o(U_1(q_1),U_2,\ldots,U_d)}{2^{q_1+1}} 
	\geq  \frac{m_o(\cH_o)}{2^{q_1+1} (d\log n+1)}.
$$
	\item[(ii)]\begin{eqnarray*}
	\mbox{Note that} && m_o(\{a_1\},\ldots, \{a_{i-1}\},U_i,\ldots,U_d)\\
	&=& \sum_{q_i=0}^{d\log n } m_o(\{a_1\},\ldots, \{a_{i-1}\},U_i((Q_{i-1},A_{i-1}),q_i),\ldots,U_d).
	\end{eqnarray*} So, there exists $q_i \in [(d \log n)^*]$ such that 
	\begin{eqnarray*}
&& m_o(\{a_1\},\ldots, \{a_{i-1}\},U_i((Q_{i-1},A_{i-1}),q_i),\ldots,U_d) \\
&&~~~~~~~~~~~~~~~~~~~~ \geq \frac{m_o(\{a_1\},\ldots, \{a_{i-1}\},U_i,\ldots,U_d)}{d\log n +1}.
\end{eqnarray*}
From the definition of $U_i((Q_{i-1},A_{i-1}),q_i)$, we have $$m_o(\{a_1\},\ldots, \{a_{i-1}\},U_i((Q_{i-1},A_{i-1}),q_i),\ldots,U_d)  < \size{U_i((Q_{i-1},A_i),q_i)} \cdot 2^{q_i +1}$$ Hence, there exists $q_i \in [(d \log n)^*]$ such that 
\begin{eqnarray*}
	\size{U_i((Q_{i-1},A_i),q_i)} &>& \frac{m_o(\{a_1\},\ldots, \{a_{i-1}\},U_i((Q_{i-1},A_{i-1}),q_i),\ldots,U_d\})}{2^{q_i +1}}\\
	&\geq& \frac{m_o(\{a_1\},\ldots, \{a_{i-1}\},U_i,\ldots,U_d\})}{2^{q_i +1}(d\log n +1)}\\
	&\geq& \frac{2^{q_{i-1}}}{2^{q_i +1}(d\log n +1)}
\end{eqnarray*}
\end{itemize}

}


\end{document}